\theoremstyle{thmstyleone}%
\newtheorem{theorem}{Theorem}
\theoremstyle{thmstyletwo}%
\newtheorem{example}{Example}%
\theoremstyle{thmstylethree}%
\newtheorem{Assumption}{\bf Assumption}
\newcommand*{\addFileDependency}[1]{
  \typeout{(#1)}
  \@addtofilelist{#1}
  \IfFileExists{#1}{}{\typeout{No file #1.}}
}
\begin{document}

\title[Constructing Custom Thermodynamics Using Deep Learning]{Constructing Custom Thermodynamics Using Deep Learning}

\author[1,2]{\fnm{Xiaoli} \sur{Chen}}\email{xlchen@nus.edu.sg}
\equalcont{These authors contributed equally to this work.}
\author[3]{\fnm{Beatrice W.} \sur{Soh}}\email{beatrice\_soh@imre.a-star.edu.sg}
\equalcont{These authors contributed equally to this work.}
\author[3]{\fnm{Zi-En} \sur{Ooi}}\email{ooize@imre.a-star.edu.sg}
\author[4]{\fnm{Eleonore} \sur{Vissol-Gaudin}}\email{eleonore.vg@ntu.edu.sg}
\author[5,6]{\fnm{Haijun} \sur{Yu}}\email{hyu@lsec.cc.ac.cn}
\author*[2]{\fnm{Kostya S.} \sur{Novoselov}}\email{kostya@nus.edu.sg}
\author*[2,3,4]{\fnm{Kedar} \sur{Hippalgaonkar}}\email{kedar@ntu.edu.sg}
\author*[1,2]{\fnm{Qianxiao} \sur{Li}}\email{qianxiao@nus.edu.sg}
\affil[1]{\orgdiv{Department of Mathematics}, \orgname{National University of Singapore}, \orgaddress{\postcode{119077}, \country{Singapore}}}
\affil[2]{\orgdiv{Institute for Functional Intelligent Materials}, \orgname{National University of Singapore}, \orgaddress{\postcode{117544}, \country{Singapore}}}
\affil[3]{\orgdiv{Institute of Materials Research and Engineering}, \orgname{A*STAR (Agency for Science)}, \orgaddress{\street{2 Fusionopolis Way}, \postcode{138634}, \country{Singapore}}}
\affil[4]{\orgdiv{School of Materials Science and Engineering}, \orgname{Nanyang Technological University}, \orgaddress{\postcode{639798},\country{Singapore}}}
\affil[5]{\orgdiv{LSEC \& ICMSEC, Academy of Mathematics and Systems Science}, \orgname{Chinese Academy of Sciences}, \orgaddress{\city{Beijing}, \postcode{100190},\country{China}}}
\affil[6]{\orgdiv{School of Mathematical Sciences}, \orgname{University of Chinese Academy of Sciences}, \orgaddress{\city{Beijing}, \postcode{100049},\country{China}}}

\abstract{
One of the most exciting applications of artificial intelligence (AI) is automated scientific discovery based on previously amassed data, coupled with restrictions provided by known physical principles, including symmetries and conservation laws. Such automated hypothesis creation and verification can assist scientists in studying complex phenomena, where traditional physical intuition may fail. Here we develop a platform based on a generalized Onsager principle to learn macroscopic dynamical descriptions of arbitrary stochastic dissipative systems directly from observations of their microscopic trajectories. Our method simultaneously constructs reduced thermodynamic coordinates and interprets the dynamics on these coordinates. We demonstrate its effectiveness by studying theoretically and validating experimentally the stretching of long polymer chains in an externally applied field. Specifically, we learn three interpretable thermodynamic coordinates and build a dynamical landscape of polymer stretching, including the identification of stable and transition states and the control of the stretching rate. Our general methodology can be used to address a wide range of scientific and technological applications.
}


\maketitle

The modern scientific method adopts a universal approach that ensures stable and
non-conflicting progression of our understanding of nature: new theories need
to be hypothesized and tested on previously amassed data, be compatible with the
basic scientific principles and be verifiable by experiments.
Unfortunately, there is no general algorithmic
recipe to do so in complex systems to facilitate discovery.
Hence, up to now, only the most basic physical phenomena
-- often static, in equilibrium -- are
described by an intuitive set of equations.
Many dynamic, non-equilibrium
phenomena, which determine functionality in biology, soft-condensed matter, and
chemistry, are instead described via very approximate, empirical laws.
The advancement of artificial intelligence and
machine learning gives rise to the possibility of a data-driven solution to this
challenge~\cite{noh2020machine,hippalgaonkar2023knowledge}.

In this paper, we develop Stochastic OnsagerNet, an AI platform which can
discover an interpretable and closed thermodynamic description of an arbitrary
stochastic dissipative dynamical system directly from observations of
microscopic trajectories.
There are essentially two types of approaches
to understand and predict the behavior
of dynamical processes from data
-- unstructured and structured.
Unstructured approaches parametrizes dynamical equations
by a generic set of building blocks, such as
fixed polynomials~\cite{brunton2016discovering},
trainable feature maps or kernels~\cite{hamzi2021learning,dietrich2021learning},
and determines the associated parameters that best fit the observations.
Physical insights can be incorporated as regularizers in the fitting process~\cite{raissi2019physics}.
Their generality comes at a cost of long-time predictive accuracy,
stability~\cite{yu2020onsagernet}, and more importantly, interpretability.
This is addressed by the class of structured approaches,
where physical insights directly guide the design of model architectures.
Our approach belongs to this latter
category. Previous work in this direction include models based on
Hamiltonian or symplectic
dynamics~\cite{hesthaven2022rank,valperga2022learning},
Poisson systems~\cite{jin2022learning} and quasi-potentials~\cite{lin2022data}.
However, to date there lacks a general structured approach
to model dissipative, non-equilibrium and noisy dynamics
that often arise in soft matter, biophysics and other applications.
Our methodology based on the classical
Onsager principle~\cite{onsager1931re1,onsager1931re2}
is tailored to such problems.

Macroscopic thermodynamic descriptions
of physical systems are highly sought after
for the insights they provide.
A prototypical example is the ideal gas law
as a macroscopic description of non-interacting gas systems.
These guide the design of verification experiments and
provides principled ways to manipulate macroscopic behavior.
For a general complex dynamical system,
however, constructing an intuitive thermodynamic description that enables
subsequent analysis and control is a daunting task.
Our approach addresses this challenge as follows.
For a given microscopic dynamics, we learn a macroscopic thermodynamic description
via the simultaneous construction of low-dimensional closure coordinates
-- ensured to be partially interpretable -- and a time evolution law on these coordinates.
Unlike general AI approaches, our platform intrinsically limits
the search to physically relevant evolution laws.
In particular, we ensure compatibility with existing scientific knowledge
by constructing our neural network architecture based on a generalized Onsager principle.

We demonstrate our method by learning the
stretching dynamics of polymer chains containing up to 900
degrees of freedom,
condensing it into a thermodynamic description involving
only three macroscopic coordinates
that governs polymer stretching dynamics
in both computational and experimental data.
We build an energy landscape of the macroscopic evolution, revealing
the presence of stable and transition states. This can be viewed as a dynamic
equation of state. Mastering such an equation allows the design of verification
computational experiments, including the interpretation of the thermodynamic coordinates and
the control of the stretching rate of the polymers.
We extend this further to conduct
single-molecule DNA stretching experiments and show that our thermodynamic description
can be used to distinguish fast and slow stretching polymers,
much beyond current human-labelling capabilities.
Furthermore, the predicted fluctuation correlations derived from the
free energy landscape agree with experimental data.

Constructing low-dimensional physical
models from high-dimensional dynamical data is an active area of research.
Data-driven modeling of dynamical processes based on the Onsager's principle
was proposed in Yu et al.~\cite{yu2020onsagernet} to study
Rayleigh-B\'{e}nard convection.
Chen et al.~\cite{chen2022.AutomatedDiscoveryFundamental}
combine encoder-decoders and manifold learning
to construct latent dynamical models directly from video data,
including those of reaction-diffusion processes and pendulum motions.
Here, we make several advancements in terms of methodology
and applications.
First, unlike the deterministic models considered previously,
we explicitly capture stochastic fluctuations --
an important element of non-equilibrium
processes at finite temperatures.
In fact, the non-trivial heterogeneity of polymer stretching dynamics
studied in this paper is directly caused by thermal fluctuations.
Developing our method in the stochastic setting
requires extensions of model reduction theory
and training algorithms (see theoretical results and
model implementation in Methods).
Second and more importantly,
we go beyond dimensionality reduction~\cite{yu2020onsagernet,noe2020machine,chen2022.AutomatedDiscoveryFundamental}
and solve a closure problem:
given \textit{a priori} fixed macroscopic variables of interest
(e.g. polymer extension),
we construct both the closure coordinates
sufficient to govern the evolution of these macroscopic variables,
and the dynamics that describes this evolution.
Compared with more flexible parametrizations of
reduced dynamics~\cite{chen2022.AutomatedDiscoveryFundamental},
our approach inherently limit the search space
to those satisfying a generalized Onsager principle,
which sacrifices complete generality
but facilitates physical interpretation of the closure coordinates
and the dynamical landscape.

\section*{Results}

We now describe our approach.
The most complete description of a complex, multi-component system is the
coordinates of all the components as a function of time $t$ ($X(t)$). For the ideal
gas it would be the positions and momenta of all molecules and for a magnetic
system -- the spin state of each atom. An alternative to this expensive
microscopic modelling approach is a thermodynamic one, where the full
description is replaced by some macroscopic coordinates ($Z^{*}(t)$, with
dimensionality much smaller than $X(t)$). This can be the pressure of an ideal
gas, or the magnetization of a magnetic system. The thermodynamic approach
links these macroscopic coordinates to other macroscopic coordinates, or
closure variables ($\hat{Z}(t)$),
and external parameters (volume and temperature for ideal gas, magnetic field and temperature for magnetic systems)
via an equation of state.

We propose a generic approach of building such custom thermodynamics for an arbitrary stochastic,
dissipative dynamical system from data.
We are given macroscopic coordinates $Z^*$ whose dynamics we wish to model.
For polymer dynamics this can be a single variable -- the extension of the
polymer chain, see Fig.~\ref{fig:workflow}.
Then, we learn a set of closure variables $\hat{Z}(t)$
and simultaneously an evolution law on the combined thermodynamic
coordinates $Z(t)=(Z^*(t),\hat{Z}(t))$
that enables scientific understanding, experimental verification, and control.
The evolution equation is a generalization of
the classical Onsager principle~\cite{onsager1931re1,onsager1931re2}
that has been successfully applied to model a variety of non-equilibrium phenomena,
including phase separation kinetics, gel dynamics and
molecular modeling~\cite{doi2011onsager,doi2015onsager}.
It posits a time evolution law
\begin{align}
  \dot{Z}(t) &= -M \nabla V(Z(t)),
 \end{align}
for a given set of coordinates $Z(t)$, where the symmetric positive
semi-definite matrix $M$ models energy dissipation and $V$ is a generalized
potential.
A limitation of the Onsager principle is its inability to capture dynamics
far from equilibrium, or with substantial stochastic behavior. To this end, we
propose an extension in the form of a \textit{generalized stochastic Onsager
principle}
\begin{equation}\label{eq:sto_onsa}
   \dot{Z}(t) =
   -[M(Z(t))+W(Z(t))]\nabla V(Z(t)) + \sigma(Z(t)) \dot{B}(t),
\end{equation}
where $M(\cdot), W(\cdot)$ are now functions of the reduced coordinates $Z$
that output $d\times d$ matrices.
$M(\cdot)$ is symmetric positive semi-definite
to conform to stability requirements and Onsager's reciprocal relations~\cite{onsager1931re1},
while $W(\cdot)$ is anti-symmetric and models conservative forces. The diffusion
matrix $\sigma(\cdot)$ together with the white noise process $\dot{B}(t)$
models the thermal fluctuations in the system. Eq.~\eqref{eq:sto_onsa} forms
the basis of our dynamical model in reduced coordinates.
We note that alternative generalizations of the Onsager principle
have been proposed using large deviations theory~\cite{mielke2014.RelationGradientFlows},
but their forms are more complex and hence
less amenable to computations.
It can be shown that our model has long-time stability through
energy dissipation up to the order of thermal fluctuations
(Thm.~\ref{thm:appen_stability})
and the flexibility to represent many physical stochastic processes,
including Langevin and generalized Poisson dynamics
(see theoretical results in Methods).
Our method departs from classical modelling paradigms, where the unknown equation
parameters are few and can be fitted from few experiments. Instead, the unknowns
here are functions $M,W,V,\sigma$. We leverage
machine learning and represent these functions as trainable deep neural networks, while
preserving the required physical constraints (e.g. symmetric positive
definiteness of $M$). Simultaneously, we generate a set of the
closure coordinates by another deep neural network, which combines approximation
flexibility of residual networks and approximate feature orthogonality
through principal component analysis.
This is to be contrasted with generic coarse-graining methods
based on volume averages~\cite{tanaka2019multi},
in that we seek a very small set of closure coordinates that
are sufficient to describe the motion of the macroscopic states of interest.
Our learning-based approach to discover hidden coordinates
shares some similarity with recently proposed
machine learning-based coarse-graining methods in molecular
simulations~\cite{noe2020machine,friederich2021machine}
but here we work with a closure problem,
thus we may end up with macroscopic dynamics
of substantially lower dimensions.
We perform end-to-end training of the combined architecture on large-scale microscopic trajectory data
to simultaneously learn the reduced coordinates and their dynamics.
The overall workflow for creating custom thermodynamics, which we call \textit{Stochastic
OnsagerNet} (S-OnsagerNet), is summarized in Fig.~\ref{fig:workflow}. Detailed
network architectures and training algorithms are found in Methods.

\paragraph*{Training and prediction of polymer stretching dynamics}

We first demonstrate our approach by modelling the temporal evolution of
polymer extension under elongational forces, which has long been of interest to
the polymer physics
community~\cite{Bird1987,Doi1988,Larson1999structure,McKinley2003}. Hallmark
experimental~\cite{Perkins1997,Smith1998} and computational
studies~\cite{Larson1999,Jendrejack2002,Hsieh2003} in elongational rheology of
dilute polymers have examined the deformation of single DNA molecules in planar
elongational flows and revealed the highly heterogeneous stretching dynamics
among identical polymer chains. Due to the complex interactions within and
stochastic nature of the system, it is challenging to identify macroscopic
descriptors of the polymer chain (closure coordinates) and governing equations
on these descriptors that are sufficient to determine the outcome of the
stretching dynamics. Yet, such a thermodynamic description is essential for
understanding the origins of unfolding heterogeneity and paves the way to make
desired modifications to the unfolding dynamics. Thus, our data-driven method
offers a promising alternative to achieve this goal.

We simulate polymer chain stretching in a planar elongational
flow. The polymer chain consists of 300 coarse-grained beads
connected by rigid rods, resulting in 900 degrees of freedom if we ignore inertial effects
(Fig.~\ref{fig:unfolding_times}(A,B)). Snapshots of the shape of the chains
under stretching conditions are shown in Fig.~\ref{fig:unfolding_times}(C),
revealing highly heterogeneous dynamics of the
chain extension (Fig.~\ref{fig:unfolding_times}(C,D)), defined as the projected
chain length along the elongational axis of the flow. This is our
macroscopic coordinate of interest $Z^*(t)$.
Our aim is to model its stochastic evolution and
understand the origin of its heterogeneity.
We train the S-OnsagerNet on this dataset following the workflow in Fig.~\ref{fig:workflow}.
The network architecture selection and training procedures are found in Methods.
Our approach constructs 2 closure coordinates in addition to the chain extension $Z^* (t)$,
leading to a 3 dimensional dynamical system -- following Eq.~\eqref{eq:sto_onsa}
with learned functions $M,W,V$ and $\sigma$ -- that governs the dynamics of
stretching. We have empirically chosen the number of macroscopic coordinates:
using more than 3 did not substantially improve predictive accuracy, whereas a
2-variable system has modelling limitations due to physical symmetry. The
detailed selection procedure of the reduced coordinate dimension is discussed
under polymer dynamics analysis in methods.

In Fig.~\ref{fig:unfolding_times}(E-M), we test the trained S-OnsagerNet on
three unseen, different and representative initial polymer configurations. The
selected chains start with similar extension lengths, but subsequently stretch
at vastly different rates. Fig.~\ref{fig:unfolding_times}(E-G,I-K,M-P) show
that the true statistics (black) can be accurately predicted (red). Moreover,
the distributions of the time taken to reach a reference extension length are
successfully captured (Fig.~\ref{fig:unfolding_times}(H,I,M)).

\paragraph{Interpreting learned closure coordinates}
Having shown that only two closure coordinates $\hat{Z}=(Z_2,Z_3)$ are required
to characterize the stochastic evolution of the extension length $Z^*=Z_1$, it
is natural to probe the meaning of these discovered coordinates.
Here, we utilize an intrinsic property of neural networks -- it represents the non-linear
reduction functions $X\mapsto Z$ as differentiable maps, as if we have learned
their analytical forms. We compute via automatic differentiation the
perturbations on a generic microscopic configuration $X$ in the directions of
$\pm \partial Z_2 / \partial X$ and $\pm \partial Z_3 / \partial X$,
corresponding to conformations with steepest changes in $Z_2$ and $Z_3$
respectively. The resulting conformational changes suggest physical
interpretation of these coordinates. For example, from
Fig.~\ref{fig:Z2_Z3_meaning}(A), we observe that perturbations in the direction
of $\partial Z_2 / \partial X$ tend to change the end-to-end distance in the
elongational axis, or distance between the first and the last bead in the
polymer chain along the elongational axis. We confirm this hypothesis by visualizing
the correlation of the end-to-end distance and the magnitude of $Z_2$ in
Fig.~\ref{fig:Z2_Z3_meaning}(B,C). A similar analysis reveals the correlation
between $Z_3$ and a degree of foldedness of the chain in the elongational axis
of the flow (Fig.~\ref{fig:Z2_Z3_meaning}(D-F)).

\paragraph{Free energy landscape}
The constructed potential $V$ can be interpreted as a
generalized free energy, allowing us to gain important insights into the dynamical landscape.
The local minima of $V$ represent stable or metastable states, while the saddle
points correspond to transition states.
The differentiable representation of $V$ enables us to probe this landscape.
Fig.~\ref{fig:potential} shows 2-dimensional projections of the 3-dimensional
free energy $V(Z_1,Z_2,Z_3 )$. We identify the critical points of $V$ by solving
$\nabla V(Z) = 0$ using the Broyden-Fletcher-Goldfarb-Shanno (BFGS) method.
We found two stable fixed points and two saddle points of interest marked in
Fig.~\ref{fig:potential}. Using a simultaneously trained PCA-ResNet decoder,
we can reconstruct the macroscopic spatial coordinates of the
polymer chain at the critical points to identify their physical origin. Up to
reflection symmetry in the elongational axis, the stable points correspond to
fully stretched states, whereas the saddle points refer to completely folded
states. The origin of the heterogeneity in unfolding times is now clear: a
rapidly stretching polymer is the one that avoids the saddle point and goes
directly to a stable minimum, whereas a slowly stretching chain gets trapped
around the stable manifold (attractive part) of the saddle point for a long
time, before finally escaping through the unstable manifold (repulsive part) of
the saddle towards the stable minimum (see Supplementary Video~1).
We confirm this by
overlaying a fast and a slow unfolding trajectory with the potential landscape
in Fig.~\ref{fig:potential}(C,F). Despite the similarity between the initial
chain configurations, as demonstrated by the proximity between the initial
points on the potential energy landscape, the chains exhibit different
stretching behaviors that can be rationalized by the constructed potential.

Moreover, a Taylor expansion via automatic differentiation of the learned $V(Z)$
captures the leading-order fluctuations near a stable stretched state.  We
denote by $\delta V$ a typical energy fluctuation around the stretched state
(proportional to temperature). Then, ignoring small terms we find
\begin{equation}\label{Eq.NUM}
   \delta V
   \approx
   a_1\,(\delta Z_1 - a_4 \, \delta  Z_2)^2
   +
   a_2\,\delta Z_2^2
   +
   a_3\,\delta Z_3^2,
\end{equation}
where $a_1=153.1$, $a_2=205.5$, $a_3=36.96$, $a_4=1.54$ and $\delta
Z_i=Z_i-[Z_{\text{stable},1}]_i$ is the fluctuations in the thermodynamic
variables.  Note that the coefficients $a_j$ implicitly depend on the strength
of the flow.  Eq.~\eqref{Eq.NUM} is an effective equation of state, from which
we observe the positive correlation of $Z_1$ and $Z_2$. The physical
interpretation is that near the stretched state, the chain extension and the
end-to-end distance tend to change simultaneously
(see Fig.~\ref{fig:potential}(G-I) and polymer dynamics analysis in Methods).

\paragraph{Controlling polymer stretching dynamics}

Further, understanding the laws of the custom thermodynamics of polymer chain
folding allows us to interact with the dynamics by designing controls over the
polymer environment to initiate desired changes in its behavior. To this end, we
perform another Taylor expansion of $V$ -- this time near a saddle point
$Z_\text{saddle,2}$ corresponding to a folded state to give another local
equation of state
\begin{equation}\label{Eq.NUM2}
   \delta V
   \approx
   b_1 \, \delta Z_1^2
   -
   b_2 \, (\delta Z_2 - b_4\,\delta  Z_3)^2
   +
   b_3 \, \delta Z_3^2,
\end{equation}
where $b_1=102.96$, $b_2=31.13$, $b_3=24.16$
and $b_4=0.255$.
Eq.~\eqref{Eq.NUM2} suggests that to escape this saddle point leading to polymer
unfolding, it is most effective to increase end-to-end distance ($Z_2$) while
decreasing foldedness ($Z_3$) in a proportional way. This leads to a data-driven
control protocol in Fig.~\ref{fig:design}(A,B). We choose the external
elongational flow as the only control parameter (in real experiments it
corresponds to switching on and off the flow of fluid or the electric
field~\cite{Smith1998}). We start with a polymer configuration near the saddle
point of the energy landscape, corresponding to a folded state. Without any
intervention, the subsequent unfolding is expected to follow a slow trajectory,
staying near the folded state for a long time. From our landscape analysis
above, the most effective escape from the saddle point is along its unstable
manifold – approximately corresponding to increasing $\lvert  \delta Z_2-
b_4\,\delta Z_3 \rvert $. Thus, to speed up unfolding we can design the
following control strategy: we turn off the external elongational flow, so the
polymer drifts randomly under Brownian forces around the saddle point. We track
its reduced coordinates, and once we observe sufficient alignment with the
unstable manifold, we turn on the externally applied elongational flow. We
observe in Fig.~\ref{fig:design}(C) that this simple control system speeds up
the unfolding dynamics substantially. We can also increase the
unfolding time by reversing this protocol (Fig.~\ref{fig:design}(D,E)).
These control strategies based on the learned thermodynamic description
have notable advantages over classical model-free control regimes (e.g.
reinforcement learning), which may require large exploration times or
small, finite state spaces~\cite{sutton2018reinforcement}.

\paragraph{Experimental validation}

Remarkably, some qualitative predictions of the constructed thermodynamic
description are confirmed by physical experiments.
Not only can one show that the constructed dynamical landscape
allows for fine-grained classification of stretching behavior on simulation data
(see polymer dynamics analysis in Methods),
we demonstrate in
Fig.~\ref{fig:experimental} that this applies directly to physical experiments.
Here, we perform single-molecule experiments to observe the stretching
trajectories of DNA molecules in a planar elongational field
(see data preparation in Methods).
We select two samples that initially appear similar
(Fig.~\ref{fig:experimental}(E,G)), making it impossible to visually
distinguish them in terms of stretching behavior.  We then cast them into the
learned thermodynamic coordinates $Z$,
which when super-imposed on the free energy landscape reveals that the $Z_2$
coordinates differ subtly, leading to different predicted stretching statistics
(Fig.~\ref{fig:experimental}(I)).  This substantially improves upon human-level
labelling, which can only occur much later in the dynamical evolution
(Fig.~\ref{fig:experimental}(H,J)).  Furthermore, we show in
Fig.~\ref{fig:experimental}(J) that the effective equation of state
Eq.~\eqref{Eq.NUM} that captures the correlations of $Z_1$ and $Z_2$ around the
stretched state also applies to experimental data from two sources, including
the current experiments and previously available data~\cite{soh2018knots}.
These results demonstrate the promise of the current approach in enabling
physical understanding and control of real polymer dynamics.

\paragraph*{Modelling spatial epidemics}

To further demonstrate general applicability,
we employ our method to derive macroscopic dynamics of spatial epidemics.
The classical spatial SIR model~\cite{murray2001mathematical}
governs microscopic evolution of infective and susceptible individuals
on a spatial domain (See Extended Data Figure 1(A-C)).
Using these microscopic trajectories,
we construct a thermodynamic model that accurately models
the evolution of the spatial averages of infective and susceptible individuals
(Extended Data Figure 1(D,E)) with an
additional learned closure coordinate.
Moreover, following the same approach before,
we can interpret this coordinate as the spatial overlap of
infective and susceptible individuals (Extended Data Figure 2),
thereby rationalizing the dynamical landscape (Extended Data Figure 3),
where this overlap determines the onset and outcome of disease spread.
Details are found in Methods under spatial epidemics analysis.

\section*{Discussion}

The potential applicability of our method goes beyond polymer and epidemic dynamics,
and includes general complex dissipative processes such as
protein folding~\cite{Dobson2003},
self-assembly~\cite{whitesides2002self,capito2008self} and
glassy systems~\cite{cipelletti2005slow,stillinger2013glass}.
Despite the importance
of potential energy landscapes for functional properties material systems, the
challenge in constructing them has limited current approaches to systems with
small degrees of freedom, and/or requiring judicious selection of system
descriptors based on expert knowledge~\cite{krivov2004hidden}.
The method described in this work offers the potential of
automating this process, creating pathways towards a multitude of opportunities for understanding and
control over various complex systems and their scientific applications.

There are many worthwhile future research directions
to further improve the robustness and generality of the proposed method.
Here, we inherently
constrain the search space to macroscopic dynamics
that conform to the generalized stochastic Onsager principle.
Thus, it naturally has limited ability to model systems
that may not readily admit such a description, such as chaotic systems.
Moreover, the present model reduction and stability theory
require the thermal noise to be small
compared with the dissipative and conservative forces.
While this is the case for the polymer dynamics studied here,
our theory needs to be expanded in order to handle
highly stochastic cases.
In terms of training methodology,
the current trial-and-error selection of the dimension
of the closure variables can be made more systematic,
e.g. by building on manifold learning
approaches~\cite{chen2022.AutomatedDiscoveryFundamental}.
Another potential improvement is the data sampling process.
We observed in our numerical experiments that
accurate construction of the dynamical landscape
requires the trajectory data to sufficiently sample
the regions of interest (stable and transition states).
An adaptive or active learning algorithm~\cite{settles2012.ActiveLearningVolume,zhao2022.AdaptiveSamplingMethods}
that couples data sampling and S-OnsagerNet training
can be developed to improve upon the current
random sampling strategy.
On the scientific problem of polymer dynamics,
we have only considered motion under a single stretching force.
It is worthwhile to extend our study to varying stretching
conditions in order to build a more comprehensive
picture of polymer stretching.
More broadly, one may apply our approach
to learn macroscopic thermodynamics of other systems
of scientific interest.

\section*{Methods}

\paragraph{Theoretical results}

In this section, we collect a number of theoretical results concerning the
\textit{stochastic OnsagerNet} (S-OnsagerNet) approach. We first show that if a
high dimensional stochastic dynamical system satisfies the \textit{generalized
stochastic Onsager principle} (GSOP), then, any well-behaved reduction into a
lower dimensional system will result in one that obeys approximately the GSOP
introduced in Eq.~\eqref{eq:sto_onsa} (Theorem~\ref{thm:appen_reduction}). An immediate
consequence is that our model reduction approach is theoretically justified for
a wide variety of dissipative and conservative systems, including molecular
dynamics \cite{nadler1987molecular}, stochastic Hamiltonian systems
\cite{milstein2002symplectic}, and stochastic Lotka-Volterra model
\cite{goel1971volterra}. Next, we prove that dynamics described by the GSOP
satisfy an energy dissipation law (see Theorem~\ref{thm:appen_stability}) and
thus our machine learning approach produces stable dynamics at sufficiently low
temperatures.

For convenience,
we show that there are two equivalent forms of the GSOP.
The formulation of the GSOP we use to construct our neural networks is
\begin{align}\label{eq:sto_onsa1}
  dZ(t) &= -(M(Z(t))+W(Z(t))) \nabla V(Z(t))dt+\sigma(Z(t))dB(t).
\end{align}
Now, assuming $M(\cdot)+W(\cdot)$ is invertible, we define
\begin{align}
 \tilde{M}(Z)&=\frac{(M(Z)+W(Z))^{-1}+((M(Z)+W(Z))^{-1})^T}{2},\nonumber\\
 \tilde{W}(Z)&=\frac{(M(Z)+W(Z))^{-1}-((M(Z)+W(Z))^{-1})^T}{2},\nonumber\\
 \tilde{\sigma}(Z)&=(M(Z)+W(Z))^{-1}\sigma(Z).\nonumber
\end{align}
Observe that
Eq.~\eqref{eq:sto_onsa1} can be rewritten in the form
\begin{align} \label{eq:sto_onsa2}
(\tilde{M}(Z(t))+\tilde{W}(Z(t)))  dZ(t) &= - \nabla V(Z(t))dt+\tilde{\sigma}(Z(t))dB(t).
\end{align}
A similar construction shows that we can also rewrite Eq.~\eqref{eq:sto_onsa2}
in the form of Eq.~\eqref{eq:sto_onsa1} for any $\tilde{M},\tilde{W},\tilde{\sigma}$
assuming similar invertibility conditions.
Thus, they are in fact equivalent.
While the numerical implementation is based on Eq.~\eqref{eq:sto_onsa1},
the form Eq.~\eqref{eq:sto_onsa2} is also useful, and in the following
we refer to both as GSOP.

Now, we demonstrate the general applicability of the GSOP
in the context of model reduction.
We consider a microscopic (high dimensional) dynamics
satisfying a GSOP of the form
\begin{align}\label{eq:high_onsager}
    dX(t) \!&= \!-\! \left(M_1(X(t))+W_1(X(t))\right)\nabla V_1(X(t))dt \!+\!\sqrt{\epsilon_1}\Sigma_1(X(t)) dB_1(t),
\end{align}
where $X(t) \in \mathbb{R}^D$, $M_1(\cdot), W_1(\cdot)$ are symmetric positive
semi-definite and anti-symmetric matrix valued functions respectively,
$\Sigma_1(\cdot)$ is the $D\times p_1$ dimensional diffusion matrix,
$B_1$ is a $p_1$-dimensional Brownian motion
and $\epsilon_1$ is a positive
parameter related to temperature (e.g. $\epsilon_1 \propto k_B T$
with $k_B$ being the Boltzmann's constant and $T$ the temperature).

We now show that many high dimensional systems of physical interest
indeed satisfy a version of GSOP.
We first consider the well-known Langevin dynamics, which has been
used to model many stochastic dynamical systems  e.g. molecular dynamics \cite{tuckerman2010statistical}.

\begin{example}\label{exa:appen_Langevin}
The Langevin equation
\begin{equation}
\begin{split}\label{Langevin}
m \ddot{x}&=-\nabla  U(x)-m\gamma_1  \dot{x}+\sqrt{2m\gamma_1 k_B T} R(t),\\
\end{split}
\end{equation}
can be written in the form of Eq.~\eqref{eq:sto_onsa2}, where the dot denotes a time
derivative, $\dot {x}$ is the velocity, $\ddot{x}$ is the acceleration,
$U(x)$ is the particle interaction potential, and so $-\nabla U(x)$ is the potential force;
$\gamma_1$  is the damping constant (units of reciprocal time),
$R(t)=\dot{B}(t)$ is a delta-correlated stationary Gaussian process with
zero-mean, satisfying $$\left\langle R(t)   \right\rangle =0
,~~~\left\langle R(t)R(t')\right\rangle =\delta (t-t').$$ If we set
$\dot{x}=v$, the Langevin equation can be written as
\begin{align*}
\left(    \begin{array}{ccc}
      m \gamma_1&m\\
       -m &0 \\
    \end{array} \right) \left(    \begin{array}{cc}
       dx\\
       dv\\
    \end{array} \right)  =-\left(    \begin{array}{cc}
       \nabla U(x)\\
       mv \\
    \end{array} \right) dt+\left(    \begin{array}{cc}
       \sqrt{2m\gamma_1 k_B T}\\
      0 \\
    \end{array} \right) dB(t) .\nonumber\
 \end{align*}
Denoting
$X=\left(    \begin{array}{cc}
       x\\
       v\\
    \end{array} \right)$, $\tilde{M}= \left(    \begin{array}{ccc}
       m\gamma_1&0\\
       0 &0 \\
    \end{array} \right)$, $\tilde{W}= \left(    \begin{array}{ccc}
       0&m\\
        -m &0 \\
    \end{array} \right)$, $\Sigma=\left(    \begin{array}{cc}
       \sqrt{2m\gamma_1 k_B T}\\
      0 \\
    \end{array} \right)$ and $V(x,v)=U(x)+\frac{m}{2}v^2$, the Langevin equation can be written in the form of the GSOP as follows:
\begin{align*}
    \left(  \tilde{M}+\tilde{W}\right)     dX    =-    \nabla V   dt+\Sigma dB.
\end{align*}
\end{example}

Another important class of dynamical systems are those described by Poisson
brackets~\cite{beris1994thermodynamics}, whose stochastic extension encompasses
many applications, including the stochastic Lotka-Volterra models and
variants~\cite{goel1971volterra}. In the following, we show that these
dynamical systems can also be written in the form of the GSOP.

\begin{example}\label{exa:appen_poisson}
The stochastic dynamics with generalized coordinates $(q_1, \cdots,q_n,p_1,
\cdots,p_n)$ described by generalized Poisson brackets
\begin{align}\label{Poisson}
    dF =(\{F,H\}-[F,H])dt+\sigma(F) dB,
\end{align}
can be written in the form of Eq.~\eqref{eq:high_onsager},
where $H(q_1, \cdots,q_n;p_1, \cdots,p_n)$ is the Hamiltonian of the
system, $F$ is an arbitrary function depending on the system variables. The
Poisson bracket $\{\cdot,\cdot\}$ and the dissipation bracket
$[\cdot,\cdot]$ are defined as \begin{align} \{F,H\}&=\sum_{i=1}^n\left(
    \frac{\partial F}{\partial q_i}\frac{\partial H}{\partial
    p_i}-\frac{\partial F}{\partial p_i}\frac{\partial H}{\partial q_i}
    \right)\nonumber\\ [F,H]&=J_F M J_H^T,~~~J_F=\left[\frac{\partial
F}{\partial q_1},\cdots, \frac{\partial F}{\partial q_n},\frac{\partial
F}{\partial p_1},\cdots,\frac{\partial F}{\partial p_n}\right] ,\nonumber\
\end{align} where $M$ is symmetric positive semi-definite.

Denote $(h_1,h_2,\cdots,h_{2n})=(q_1,\cdots,q_n,p_1,\cdots,p_n)$.
By the definition of $\{F,H\}$ and taking
$W=
\left(
    \begin{array}{ccc}
        0 &-I_n\\
        I_n &0
    \end{array}
\right)$, we have
\begin{align*}
    \{F,H\}&=\left(\frac{\partial F}{\partial q},\frac{\partial F}{\partial p}\right) \left(\begin{array}{ccc}
          0 &I_n\\
           -I_n &0 \\
        \end{array} \right) \left(    \begin{array}{cc}
          \frac{\partial H}{\partial q}\\
           \frac{\partial H}{\partial p} \\
        \end{array} \right) =\nabla_h F \left(    \begin{array}{ccc}
          0 &I_n\\
           -I_n &0 \\
        \end{array} \right)(\nabla_h H)^T=-J_F W J_H^T.
\end{align*}
Hence, Eq.~\eqref{Poisson} can be written as
\begin{align*}
    dF =(\nabla_h F)^T dh=
    -J_F(W+M)J_H^T dt+\sigma(F)  dB.
\end{align*}
Taking $F=(h_1,\cdots,h_{2n})$ and $\nabla_h F=I_{2n}$, we obtain
\begin{align*}
    dh=
    -J_F(W+M)J_H^T dt+\sigma(F)  dB=
    -(W+M)  \nabla_hH dt+\sigma(h)  dB.
\end{align*}
\end{example}

Next, let us consider the reduction of a microscopic dynamical system
satisfying a GSOP ($X(t)$) into a macroscopic dynamical system ($Z(t)$).
This is achieved by a differentiable reduction function
$\phi : \mathbb{R}^D \rightarrow \mathbb{R}^d$
such that $Z(t) \approx \phi(X(t))$.
Moreover, we consider a differentiable reconstruction function
$\psi : \mathbb{R}^d \rightarrow \mathbb{R}^D$
such that $X(t) \approx \psi(Z(t))$.
Our main result is that $Z(t)$ also satisfies an approximate GSOP.
In other words, the GSOP family is approximately invariant
under dimensionality reduction, or coordinate transformation in general.

In the following, we adopt the notation
\begin{align*}
    \nabla \phi_i(X(t)) &:=
    \left(\frac{\partial \phi_i(X(t))}{\partial x_1},\frac{\partial
    \phi_i(X(t))}{\partial x_2},\cdots,\frac{\partial \phi_i(X(t))}{\partial
    x_D}\right)^T,\\
    \nabla \phi(X(t)) &:=
    \left(\nabla \phi_1(X(t)),\cdots,\nabla \phi_d(X(t))\right).
\end{align*}
We will also adopt the following technical assumptions.
\begin{Assumption}\label{app:assumption1}
The functions $M_1,W_1, \nabla V_1: \mathbb{R}^D  \rightarrow \mathbb{R}^D$,
$\Sigma_1:  \mathbb{R}^D  \rightarrow \mathbb{R}^{D\times p_1}$,
$\phi:\mathbb{R}^D \rightarrow \mathbb{R}^d$,
and $\psi:\mathbb{R}^d \rightarrow \mathbb{R}^D$
satisfy:
\begin{enumerate}
    \item Growth condition: there exists $L>0$ such that, for all $x\in \mathbb{R}^D$ and $z\in \mathbb{R}^d$,
    \begin{align}
    &\lvert (M_1(x)+W_1(x))\nabla V_1(x)\rvert+\lvert \Sigma_1(x)\rvert+\sum_{i=1}^d\lvert \nabla \phi_i(x)\rvert \leq L^2(1+\lvert  x\rvert),\nonumber\\
    &\sum_{i=1}^D\lvert \nabla \psi_i(z)\rvert +\lvert \psi(z)\rvert\leq L^2(1+\lvert  z\rvert),\nonumber\
    \end{align}
    \item Lipschitz condition: there exists $L>0$ such that, for all $x\in \mathbb{R}^D$, and $z\in \mathbb{R}^d$, the function $(M_1(x)+W_1(x))\nabla V_1(x) $, $\Sigma_1(x)$, $\{\nabla  \phi_i(x)\}_{i=1}^{d}$, $\{\nabla  \psi_i(z)\}_{i=1}^{D} $ and  $\psi(z)$ satisfy the Lipchitz condition with constant $L$.
    \item Approximate reconstruction: there exists $\epsilon_0>0$ such that,
    $\sup_{x\in\Omega} \lvert  x-\psi(\phi(x))\rvert<\epsilon_0$ where $\Omega \subset \mathbb{R}^D$ is a domain such that $X(t) \in \Omega$ for all $t\in[0,T]$ almost surely.
\end{enumerate}
Here, $\lvert \cdot\rvert$ is the Euclidean norm for a vector and Frobenius norm for a matrix.
\end{Assumption}

\begin{theorem}\label{thm:appen_reduction}
    Let $X(t)$ satisfy Eq.~\eqref{eq:high_onsager} and $Z(t)$ satisfy Eq.~\eqref{eq:sto_onsa1} with
    \begin{align}
        M(Z) &=  \nabla\phi[\psi(Z)]^T
          M_1(\psi(Z))
        \nabla \phi[\psi(Z)],
        \nonumber\\
        W(Z) &= \nabla\phi[\psi(Z)]^T
           W_1(\psi(Z))
        \nabla \phi[\psi(Z)],
        \nonumber\\
        V(Z)&=V_1[\psi(Z)],\nonumber\\
        \sigma_1(Z) &=
        \sqrt{\epsilon_1} \nabla\phi[\psi(Z)]^T
        \Sigma_1[\psi(Z)] ,\nonumber\\
        \sigma(Z) &= [\sigma_1(Z) \sigma_1(Z)^T]^{\frac{1}{2}} ,\nonumber
    \end{align}
    Then, for each $u\in \mathcal{C}^{2}(\mathbb{R}^d)$ (twice continuously differentiable function)
    there exists a constant $C>0$, independent of $\epsilon_0$ and $\epsilon_1$, such that
    \begin{align}
    \lvert \mathbb{E}   u(\phi[X(t)])- \mathbb{E}u(Z(t))\rvert\leq C(\epsilon_0+\epsilon_1 ). \nonumber\
    \end{align}
\end{theorem}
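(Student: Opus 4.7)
The plan is to compare the two Markov semigroups via the backward Kolmogorov equation for the reduced dynamics and an It\^o expansion along the full dynamics, then bound the resulting ``generator mismatch'' term by term using the defining formulas for $M,W,V,\sigma$.

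First, I would introduce the value function $w(t,z):=\mathbb{E}[u(Z_T)\mid Z_t=z]$, which by the Markov property of Eq.~\eqref{eq:sto_onsa1} satisfies the backward Kolmogorov equation $\partial_t w+\mathcal{L}^Z w=0$ on $[0,T]$ with terminal data $w(T,\cdot)=u$, where $\mathcal{L}^Z$ denotes the generator of $Z$. Taking $Z_0=\phi(X_0)$ deterministic gives $\mathbb{E}u(Z_T)=w(0,\phi(X_0))$. Applying It\^o's formula to $t\mapsto w(t,\phi(X_t))$ and using the PDE for $w$ yields the identity
\begin{equation*}
\mathbb{E}u(\phi(X_T))-\mathbb{E}u(Z_T)
=\int_0^T \mathbb{E}\bigl[\mathcal{L}^X(w(t,\phi(\cdot)))(X_t)-\mathcal{L}^Z w(t,\cdot)(\phi(X_t))\bigr]\,dt,
\end{equation*}
so the problem reduces to bounding the integrand uniformly in $t$ by $C(\epsilon_0+\epsilon_1)$.

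Next, I would expand both generators via the chain rule. For the drift, $\mathcal{L}^X(w\circ\phi)(x)$ contributes $-\nabla w(\phi x)^T\nabla\phi(x)^T(M_1+W_1)(x)\nabla V_1(x)$, whereas substituting the definitions of $M,W,V$ in the theorem gives the $\mathcal{L}^Z$-drift at $\phi(x)$ as $-\nabla w(\phi x)^T\nabla\phi(\psi\phi x)^T(M_1+W_1)(\psi\phi x)\nabla\phi(\psi\phi x)\nabla\psi(\phi x)\nabla V_1(\psi\phi x)$. Using the approximate-reconstruction assumption $|x-\psi\phi(x)|<\epsilon_0$ together with the Lipschitz continuity of $\nabla\phi,(M_1+W_1),\nabla V_1$ from Assumption~\ref{app:assumption1}, each evaluation at $\psi\phi x$ can be replaced by evaluation at $x$ up to an $O(\epsilon_0)$ error; the remaining factor $\nabla\phi(x)\nabla\psi(\phi x)$ is controlled through the linearization of $\psi\circ\phi\approx\mathrm{id}$, also producing an $O(\epsilon_0)$ discrepancy. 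For the diffusion, the Hessian of $w\circ\phi$ splits into a $\nabla\phi\,\nabla^2 w\,\nabla\phi^T$ piece and a $(\nabla w)\cdot\nabla^2\phi$ piece; the former, paired with $\epsilon_1\Sigma_1\Sigma_1^T$, matches $\sigma\sigma^T$ from the theorem's definition up to an $O(\epsilon_0\epsilon_1)$ mismatch, while the latter carries an explicit prefactor $\epsilon_1$, so the total diffusion error is $O(\epsilon_1)$.

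To close the argument, I would invoke standard linear-growth/Lipschitz moment bounds (Assumption~\ref{app:assumption1}(1)--(2)) to obtain $\sup_{t\le T}\mathbb{E}(1+|X_t|^p)<\infty$, combined with parabolic regularity for the backward Kolmogorov PDE giving $C^2$-bounds on $w(t,\cdot)$ dependent only on the $C^2$-norm of $u$ and on $T$. Integrating the pointwise bound $|\mathcal{L}^X(w\circ\phi)(x)-\mathcal{L}^Z w(\phi x)|\le C(1+|x|^q)(\epsilon_0+\epsilon_1)$ against these moment estimates delivers the desired inequality $|\mathbb{E}u(\phi X_T)-\mathbb{E}u(Z_T)|\le C(\epsilon_0+\epsilon_1)$ with $C$ independent of $\epsilon_0,\epsilon_1$.

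The main obstacle I anticipate lies in Step~3: extracting an $O(\epsilon_0)$ bound for $\nabla\phi(x)\nabla\psi(\phi x)-I_D$ from only the pointwise closeness $|x-\psi\phi(x)|<\epsilon_0$. This really requires a $C^1$-type reconstruction estimate (a Sobolev version of Assumption~\ref{app:assumption1}(3)), which can be recovered either by appealing to the Lipschitz continuity of $\nabla\phi,\nabla\psi$ on the invariant domain $\Omega$, or by restricting attention to trajectories that remain in a tubular neighborhood of $\psi(\mathbb{R}^d)$ on which the reconstruction is effectively a diffeomorphism. A secondary technicality is controlling $\nabla^2 w$ uniformly as $\epsilon_1\to 0$, since the $Z$-dynamics becomes degenerate; this is handled by parabolic H\"older estimates that depend only on the non-degeneracy structure of $M+W$ rather than on $\sigma$.
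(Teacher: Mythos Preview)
Your approach is genuinely different from the paper's. The paper does \emph{not} use the backward Kolmogorov equation or any PDE regularity. Instead it introduces an auxiliary process $Y(t)$ satisfying the same SDE as $Z(t)$ but with the $d\times p_1$ diffusion $\sigma_1(Y)$ and driven by the \emph{same} Brownian motion $B_1$ as $X$. This pathwise coupling lets the paper compare $\phi(X(t))$ and $Y(t)$ directly: after expanding $d\phi_i(X)$ by It\^o's formula and subtracting $dY_i$, a telescoping argument using the Lipschitz conditions and $|X-\psi(\phi(X))|<\epsilon_0$ yields a Gronwall-type bound $\mathbb{E}|\phi(X(t))-Y(t)|^2\le C(\epsilon_0+\epsilon_1)^2$. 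The same It\^o-plus-Gronwall argument is then repeated for $u(\phi(X))$ versus $u(Y)$, and finally the paper observes that $Y$ and $Z$ have the same law (since $\sigma=(\sigma_1\sigma_1^T)^{1/2}$), so $\mathbb{E}u(Y)=\mathbb{E}u(Z)$. What the paper's route buys is that it is entirely elementary stochastic calculus: no parabolic regularity for $w$, no issue with degeneracy of the diffusion as $\epsilon_1\to 0$, and in fact a stronger mean-square intermediate estimate. Your semigroup route is more systematic and closer to standard weak-error analysis, but the two technicalities you flag at the end (uniform $C^2$ bounds on $w$ as $\epsilon_1\to0$, and upgrading the pointwise reconstruction bound to a derivative bound) are precisely the extra work that the coupling argument avoids.

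On your ``main obstacle'': note that you cannot hope for $\nabla\phi(x)\nabla\psi(\phi x)\approx I_D$, since this is a $D\times D$ matrix of rank at most $d<D$. The paper sidesteps this by never writing the reduced drift through $\nabla V$ and the chain rule; it instead passes directly from $(M(Y)+W(Y))\nabla V(Y)$ to $\nabla\phi[\psi(Y)]^T f_1(\psi(Y))$ and compares the resulting $d$-dimensional expressions $\nabla\phi(X)^T f_1(X)$ and $\nabla\phi[\psi(Y)]^T f_1(\psi(Y))$ term by term via Lipschitz continuity and the $\epsilon_0$ bound. If you want to salvage your generator-comparison strategy, you should reorganize the drift mismatch in the same way rather than trying to control $\nabla\phi\nabla\psi-I_D$.
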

\begin{proof}
    Suppose $Y(t)$ satisfy the following equation:
    \begin{align}\label{eq:sto_y}
      dY(t) &= -(M(Y(t)+W(Y(t))) \nabla V(Y(t))dt+\sigma_1(Y(t))dB_1(t),
    \end{align}
    we will prove
        \begin{align}
        \mathbb{E} \lvert  u(\phi[X(t)])- u(Y(t))\rvert\leq C(\epsilon_0+\epsilon_1 ). \nonumber\
        \end{align}
    By It\^{o}'s formula, Eq.~\eqref{eq:high_onsager}
    and Assump.~\ref{app:assumption1}, we obtain
    \begin{align}\label{app:eq_X}
        &d\phi_i(X(t))\nonumber\\
        =&
        [\nabla\phi_i(X(t)) ]^TdX(t)
        +
        \frac{1}{2}
        [dX(t)]^T \nabla^2\phi_i(X(t))  dX(t)
        \nonumber\\
        =&
        [\nabla \phi_i(X(t))]^T
        \left[
            - \left(
                M_1(X(t))+W_1(X(t))
            \right)
            \nabla V_1(X(t))dt
            + \sqrt{\epsilon_1}\Sigma_1(X(t)) dB_1(t)
        \right]
        \nonumber\\
        &+
        \frac{\epsilon_1}{2}
        \mathrm{Tr}
        [
            \Sigma_1(X(t))
            \Sigma_1^T(X(t))
            \nabla^2 \phi_i(X(t))
        ] dt
        ,\
    \end{align}
    where $\nabla^2 \phi_i$ is the Hessian matrix of $\phi_i$ and $\mathrm{Tr}$ is the trace of a square matrix.

    Now, by definition, $Y(t)$ satisfies the following SDE
    \begin{align}\label{app:eq_Z}
        &dY(t)\nonumber\\
        =&-\left(M(Y(t))+W(Y(t))\right)\nabla V(Y(t))dt+\sigma_1(Y(t))dB_1(t)\nonumber\\
        =&-\nabla \phi[\psi(Y(t))]^T[M_1(\psi(Y(t)))+W_1(\psi(Y(t)))]
        \nabla\phi[\psi(Y(t))] \nabla V(Y(t))dt \nonumber\\
        &+\sqrt{\epsilon_1} \nabla \phi[\psi(Y(t))]^T \Sigma_1(\psi(Y(t))) dB_1(t)\nonumber\\
        =&-\nabla \phi[\psi(Y(t))]^T[M_1(\psi(Y(t)))+W_1(\psi(Y(t)))  ]    \nabla V_1(\psi(Y(t)))dt \nonumber\\
        &+\sqrt{\epsilon_1}  \nabla \phi[\psi(Y(t))]^T\Sigma_1(\psi( Y(t))) dB_1(t).\
    \end{align}
    Subtracting equations~\eqref{app:eq_X} and~\eqref{app:eq_Z},
    and integrating on $[0,t]$, we get
    \begin{align}
       & \phi_i(X(t))- Y_i(t)\nonumber\\
        =&
        \int_0^t
        \left[
            \nabla  \phi_i(X(r))^T f_1(X(r))-\nabla  \phi_i[\psi(Y(r))]^T f_1(\psi(Y(r)))
        \right] dr +
        \epsilon_1\int_0^t g_{i}(X(r)) dr    \nonumber\\
        &+
        \sqrt{\epsilon_1}\int_0^t
        [ \nabla  \phi_i(X(r))^T \Sigma_1(X(r))    -\nabla  \phi_i[\psi(Y(r))]^T\Sigma_1(\psi(Y(r)))] dB_1(r),\nonumber
    \end{align}
    where $f_1(\cdot)=-\left(M_1(\cdot)+W_1(\cdot)\right)\nabla V_1(\cdot)$ and $g_{i}(\cdot)=\frac{1}{2}\mathrm{Tr}[\Sigma_1\Sigma_1^T \nabla^2\phi_i](\cdot)$ with $i=1,2,\cdots,d$.

    By Cauchy-Schwarz inequality, and It\^{o} isometry, we  have
    \begin{align}\label{estimate_square}
        &\mathbb{E}\lvert \phi_i(X(t))- Y_i(t)\rvert^2\nonumber\\
        \leq &3\mathbb{E}\left\lvert  \int_0^t    [\nabla  \phi_i(X(r))^T f_1(X(r))-\nabla  \phi_i[\psi(Y(r))]^T f_1(\psi(Y(r)))]dr\right\rvert^2+3\mathbb{E}\left\lvert \int_0^t  \epsilon_1 g_{i}(X(r)) dr  \right\rvert^2\nonumber\\
        &+3\epsilon_1 \mathbb{E}\left\lvert \int_0^t [  \nabla  \phi_i(X(r))\Sigma_1(X(r))\!\! -\!\
        \nabla  \phi_i[\psi(Y(r))]^T\Sigma_1(\psi(Y(r)))] dB_1(r)\right\rvert^2\nonumber\\
        \leq &3t\mathbb{E} \int_0^t    \left\lvert \nabla  \phi_i(X(r))^T f_1(X(r))\!\!-\!\!\nabla  \phi_i[\psi(Y(r))]^T f_1(\psi(Y(r)))\right\rvert^2dr\!\!+\!3\epsilon_1^2t\mathbb{E}\int_0^t  \lvert  g_{i}(X(r))\rvert^2 dr \nonumber\\
        &+3\epsilon_1 \mathbb{E}\int_0^t \left\lvert   \nabla  \phi_i(X(r))\Sigma_1(X(r))\! -\!
        \nabla  \phi_i[\psi(Y(r))]^T\Sigma_1(\psi(Y(r))) \right\rvert^2dr.\
    \end{align}
    By Assump.~\ref{app:assumption1}, there exists positive constants $C_1$ and $C_2$ such that
    \begin{align*}
        &\mathbb{E}  \int_0^t  \lvert
         \nabla\phi_i(X(r))^T f_1(X(r))
        -\nabla  \phi_i[\psi(Y(r))]^T   f_1(\psi(Y(r)))\rvert^2dr \nonumber\\
        \leq&\mathbb{E} \int_0^t \lvert \nabla  \phi_i(X(r))^Tf_1(X(r))
        \!-\!\nabla  \phi_i[\psi(\phi(X(r)))]^T f_1(X(r))
        \!+\!\nabla  \phi_i[\psi(\phi(X(r)))]^T f_1(X(r)) \nonumber\\
        & \!-\!\nabla  \phi_i[\psi(Y(r))]^T f_1(X(r))
        \!+\!\nabla  \phi_i[\psi(Y(r))]^T f_1(X(r))
        \!-\!\nabla  \phi_i[\psi(Y(r))]^T f_1(  \psi(\phi(X(r)))    ) \nonumber\\
        &\!+\!\nabla  \phi_i[\psi(Y(r))]^T f_1(  \psi(\phi(X(r)))  )
        \!-\!\nabla  \phi_i[\psi(Y(r))]^T f_1(  \psi(Y(r))   )
        \rvert^2dr\nonumber\\
        &\leq C_1 \mathbb{E}  \int_0^t\lvert  X(r)-\psi(\phi(X(r))) \rvert^2dr+ C_2 \mathbb{E}  \int_0^t\lvert \phi(X(r))-Y(r)\rvert^2dr\nonumber\\
        &\leq C_1 t \epsilon_0^2+ C_2 \mathbb{E}  \int_0^t\lvert \phi(X(r))-Y(r)\rvert^2dr.\
    \end{align*}

    We employ a similar above argument of the third term of Eq.~\eqref{estimate_square} and get
    \begin{align*}
        \mathbb{E}\lvert \phi(X(t))- Y(t)\rvert^2
        \leq & C_3  \int_0^t\mathbb{E}\lvert \phi(X(r))-Y(r) \rvert^2dr+ C_4(\epsilon_0 +\epsilon_1 )^2 .\
    \end{align*}
     Here, we have used the Lipschitz conditions and the boundness
    of the first moment of $f_1(X)$ and $g_i(X)$, which is implied by the growth condition in Assump.~\ref{app:assumption1}.
    This shows that $\phi(X(t))$ and $Y(t)$ are close in the
    mean-square sense.
    By Gronwall's inequality, we get
    \begin{align}\label{eq:low_esti}
        \mathbb{E}\lvert \phi(X(t))- Y(t)\rvert^2
        \leq & C_5(\epsilon_0 +\epsilon_1 )^2 .\
    \end{align}

    Now, we employ a similar argument to show that
    $u(\phi(X(t)))$ and $u(Y(t))$ are close for any
    sufficiently smooth $u$.
    We apply It\^{o} formula to $u(\phi(X(t)))$ and
    $u(Y(t))$ to get
    \begin{align}
        du(\phi(X(t)))=&\nabla u\cdot d\phi(X(t))+\frac{1}{2}d \phi(X(t))^T \nabla^2 u(\phi(X(t)) d\phi(X(t))\nonumber\\
        =&\nabla u \cdot
        [(\nabla \phi(X(t))^Tf_1(X(t))+\epsilon_1 g(X(t)) ) dt+  \nabla  \phi(X(t))\Sigma_1(X(t))dB_1(t) ]\nonumber\\
        &+\frac{1}{2}\epsilon_1 \mathrm{Tr}[\nabla  \phi(X(t))\Sigma_1(X(t))(\nabla  \phi(X(t))\Sigma_1(X(t))^T \nabla^2 u ]  ) dt,\nonumber\\
        du( Y(t))
        =&\nabla u\cdot dY(t)+\frac{1}{2}d Y(t)^T \nabla^2 u(Y(t)) dY(t)\nonumber\\
        =&\nabla u \cdot
        \nabla \phi[\psi(Y(t))]^T[f_1(\psi(Y(t)))  dt+  \sqrt{\epsilon_1} \Sigma_1(\psi(Y(t)))dB_1(t) ]\nonumber\\
        &+\frac{\epsilon_1}{2} \mathrm{Tr}[\nabla  \phi[\psi(Y(t))]\Sigma_1(\psi(Y(t)))(\nabla  \phi[\psi(Y(t))]\Sigma_1(\psi(Y(t))))^T \nabla^2 u ]  dt.\nonumber\
    \end{align}
    As before, by Assump.~\ref{app:assumption1}, there exist positive constants $C_6$ and $C_7$ such that
    \begin{align}\label{eq:estimate_u}
        \mathbb{E}\lvert  u(\phi(X(t)))- u(Y(t))\rvert^2\leq &C_6\mathbb{E}\int_{0}^t\lvert \phi(X(r)))- Y(r)\rvert^2dr+C_7(\epsilon_0+ \epsilon_1 )^2.
    \end{align}
    Combining equations~\eqref{eq:low_esti} and~\eqref{eq:estimate_u}, we obtain
    \begin{align}
        \mathbb{E} \lvert u(\phi(X(t)))- u(Y(t))\rvert^2\leq C^2(\epsilon_0+\epsilon_1)^2. \nonumber\
    \end{align}
    By Jensen's inequality, we can get
    \begin{align}
        \mathbb{E} \lvert u(\phi(X(t)))- u(Y(t))\rvert\leq C(\epsilon_0+\epsilon_1). \nonumber\
    \end{align}
    According to the Eq. \eqref{eq:sto_onsa1} and \eqref{eq:sto_y}, we can get $Z(t)$ and $Y(t)$ has the some distribution, i.e.
    \begin{align}
         \mathbb{E} u(Y(t))-\mathbb{E} u(Z(t))=0. \nonumber\
    \end{align}
    Finally, by triangle inequality, we have
    \begin{align}
         &\lvert \mathbb{E} u(\phi(X(t)))- \mathbb{E} u(Y(t))\rvert\nonumber\\
         =&
         \lvert \mathbb{E} u(\phi(X(t)))-\mathbb{E} u(Y(t))+\mathbb{E} u(Y(t))- \mathbb{E} u(Z(t))\rvert \nonumber\\
         \leq&  \mathbb{E}\lvert u(\phi(X(t)))-  u(Y(t))\rvert
         +\lvert\mathbb{E} u(Y(t)) -\mathbb{E} u(Z(t))\rvert \nonumber\\
         \leq& C(\epsilon_0+\epsilon_1).\nonumber\
    \end{align}
    This completes the proof.
\end{proof}

This results demonstrate the validity of the GSOP as a dimensionality reduction
method. In short, it says that if the microscopic dynamics satisfies a GSOP,
then the macroscopic dynamics will also satisfy a GSOP approximately. Since a
large amount of conservative and dissipative microscopic physical systems are
shown to satisfy the GSOP, the S-OnsagerNet approach based on the GSOP is a
principled model reduction ansatz for physical processes.

Next, we show the stability of a solution of the GSOP.
More precisely, we prove in Theorem~\ref{thm:appen_stability} below that the mean of the potential
is non-increasing in $t$ for sufficiently low temperatures
(small $\lvert  \sigma \rvert$).
Consequently, the S-OnsagerNet produces dissipative dynamical systems that enjoy long term stability.
\begin{theorem}\label{thm:appen_stability}
The solution of Eq.~\eqref{eq:sto_onsa1} satisfies the dissipation law
\begin{align}
    \mathbb{E}V(Z(t))-\mathbb{E}V(Z(0))
    =&
    - \int_0^t \mathbb{E}\|\nabla V(Z(r))\|_{M}^2dr \nonumber\\
    &+\frac{1}{2}\int_{0}^t  \mathbb{E} \mathrm{Tr}[\sigma(Z(r))\sigma(Z(r))^T   \nabla^2 V(Z(r))] dr.\nonumber\
\end{align}
Here, $\|\cdot\|_M^2$ denotes $\lvert  M^{1/2} \cdot\rvert^2$ where
$M^{1/2}$ is the non-negative square-root of $M$.
If we assume further that there exists a positive constant $\alpha$ such that $Z \cdot M(Z) Z\geq \alpha \lvert  Z\rvert^2$ and $\frac{1}{2}  \mathrm{Tr}[\sigma(Z)\sigma(Z)^T \nabla^2V(Z)]   \leq \alpha  \lvert \nabla V(Z)\rvert^2 $ for all $Z$,
then $\mathbb{E}[V(Z(t))]$ is non-increasing in $t$.
\end{theorem}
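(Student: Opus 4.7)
The plan is to apply It\^{o}'s formula to $V(Z(t))$ and exploit the algebraic structure of the drift coefficient $-(M+W)\nabla V$, where $M$ is symmetric positive semi-definite and $W$ is anti-symmetric. First I would write
\begin{align*}
dV(Z(t)) &= \nabla V(Z(t))^T dZ(t) + \tfrac{1}{2} \mathrm{Tr}\bigl[\sigma(Z(t))\sigma(Z(t))^T \nabla^2 V(Z(t))\bigr] dt \\
&= -\nabla V(Z(t))^T (M(Z(t))+W(Z(t))) \nabla V(Z(t))\, dt \\
&\quad + \nabla V(Z(t))^T \sigma(Z(t))\, dB(t) + \tfrac{1}{2}\mathrm{Tr}\bigl[\sigma\sigma^T \nabla^2 V\bigr] dt.
\end{align*}
The anti-symmetry of $W$ forces $\nabla V^T W \nabla V = 0$ pointwise, since for any vector $v$ and anti-symmetric $W$ we have $v^T W v = -v^T W^T v = -(v^T W v)^T = -v^T W v$. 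Hence the conservative part drops out, leaving only the dissipative quadratic form $\nabla V^T M \nabla V = \|\nabla V\|_M^2$.

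Next I would take expectations on both sides. Under the growth/Lipschitz regularity implicit in the setting (so that $Z$ has finite moments and $\nabla V(Z) \sigma(Z)$ is in the appropriate $L^2$ space), the stochastic integral $\int_0^t \nabla V(Z(r))^T \sigma(Z(r))\, dB(r)$ is a genuine martingale and its expectation vanishes. Integrating from $0$ to $t$ and rearranging then yields the claimed dissipation identity
\begin{equation*}
\mathbb{E}V(Z(t))-\mathbb{E}V(Z(0)) = -\int_0^t \mathbb{E}\|\nabla V(Z(r))\|_M^2 dr + \tfrac{1}{2}\int_0^t \mathbb{E}\,\mathrm{Tr}\bigl[\sigma\sigma^T \nabla^2 V\bigr] dr.
\end{equation*}

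For the monotonicity claim, I would apply the hypothesis $z^T M(Z) z \geq \alpha |z|^2$ with the choice $z = \nabla V(Z)$, which gives $\|\nabla V(Z)\|_M^2 \geq \alpha |\nabla V(Z)|^2$. Combined with the second hypothesis $\tfrac{1}{2}\mathrm{Tr}[\sigma\sigma^T \nabla^2 V] \leq \alpha |\nabla V|^2$, the integrand of the dissipation identity is pointwise non-positive, so $\mathbb{E}V(Z(t))$ is non-increasing in $t$. The main subtlety, rather than an obstacle, is ensuring that the stochastic integral is indeed a martingale (not merely a local martingale) so that its expectation vanishes; this is handled by the growth conditions already imposed in the paper, or alternatively by a standard localization argument via stopping times $\tau_n = \inf\{t : |Z(t)| \geq n\}$ followed by monotone/dominated convergence as $n \to \infty$.
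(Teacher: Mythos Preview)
Your proposal is correct and follows essentially the same route as the paper: apply It\^{o}'s formula to $V(Z(t))$, use the anti-symmetry of $W$ to kill the $\nabla V^T W \nabla V$ term, take expectations so the stochastic integral vanishes, and then invoke the two hypotheses to bound the integrand. If anything, your remark about the martingale versus local martingale issue is slightly more careful than the paper, which simply takes expectation without comment.
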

\begin{proof}
By It\^{o} formula, we obtain
\begin{align}
dV(Z)=&\nabla V\cdot dZ+\frac{1}{2}(dZ)^T\nabla^2 V dZ\nonumber\\
=&-\nabla V\cdot [(M+W)\nabla V]dt+\nabla V\cdot \sigma dB_1+\frac{1}{2}\mathrm{Tr}[\sigma\sigma^T \nabla^2 V  ]dt\nonumber\\
=& -  \|\nabla V\|_{M}^2 dt+\frac{1}{2}\mathrm{Tr}[\sigma\sigma^T \nabla^2 V] dt+\nabla V\cdot \sigma dB_1,\label{potential_equ}
 \end{align}
where $ \nabla V\cdot M\nabla V=\|\nabla V\|_{M}^2$ and $ \nabla V\cdot W\nabla V=0$ are used.

Integrating Eq.~\eqref{potential_equ} from $0$ to $t$ and taking expectation, we obtain
\begin{align}
    \mathbb{E}V(Z(t))-\mathbb{E}V(Z(0))&= -  \int_0^t \mathbb{E}\|\nabla V\|_{M}^2dr+\frac{1}{2}\int_{0}^t  \mathbb{E} \mathrm{Tr}[\sigma\sigma^T \nabla^2 V] dr.\nonumber\
\end{align}
Finally, according to the proposed condition, it is easy to arrive at
\begin{align}
    \mathbb{E}V(Z(t))-\mathbb{E}V(Z(0))& \leq - \alpha \int_0^t \mathbb{E}\lvert \nabla V\rvert^2dr+\frac{1}{2}\int_{0}^t  \mathbb{E}\mathrm{Tr}[\sigma\sigma^T \nabla^2 V] dr \leq0, \nonumber\
\end{align}
thus the mean of the energy potential is non-increasing in $t$. This completes the proof.
\end{proof}

Theorem~\ref{thm:appen_stability} shows that energy is being dissipated
if the temperature of the ambient reservoir is sufficiently
low.
Accordingly, if the free energy $V$ has compact level sets,
then the dynamics at low temperatures will be confined
on average to these compact sets and is thus stable.
This contrasts with unstructured methods
that may learn dynamics that
are accurate for short times but induce instability
at long times.
We numerically verify this energy dissipation law
in the learned polymer dynamics in Supplementary Figure~1.

\paragraph{Model implementation}

We provide in this section the detailed implementation of
S-OnsagerNet.

\textbf{Model architecture.}
We begin with discussing the architecture design of
the neural network approximators.
Following the acquisition of the closure coordinates  $Z(t)=(Z^*(t),\hat{Z}(t))$,
the S-OnsagerNet architecture implements equation~\eqref{eq:sto_onsa1}.
To ensure the symmetric positive definiteness of $M(Z)$ and the anti-symmetry of $W(Z)$, we use a neural network to approximate $A(\cdot) : \mathbb{R}^d \rightarrow \mathbb{R}^{d^2 }$ with dimension $d^2$. Then, we take the lower-triangular part as $L_1(Z)$ and the upper-triangular part as $L_2(Z)$. $M(Z)$ and $W(Z)$ are represented by
\begin{equation*}
    \begin{aligned}
        M(Z)&=L_1(Z)L_1(Z)^T+\alpha I,\\
        W(Z)&=L_2(Z)-L_2(Z)^T,
    \end{aligned}
\end{equation*}
where $\alpha$ is a positive constant and $I$ is an identity matrix.

The energy function $V(\cdot)$ is lower bounded, so we use the following structure
\begin{align*}
    V(Z)=\frac{1}{2}\sum_{i=1}^{m}\left(U_i(Z)+\sum_{j=1}^{d}\gamma_{ij}Z_{j} \right)^2+\beta \lvert  Z\rvert^2,
\end{align*}
where $U(Z)$ is a neural network with $d$-dimensional input and $m$-dimensional output,
$\{\gamma_{ij}\}$ is a trainable matrix, and $\beta$ is a positive parameter.

In the architecture used for the polymer dynamics application in this paper, we
set $\alpha=0.1$ and utilize a neural network with 2 hidden layers with 20
neurons each and the $\operatorname{tanh}$ activation function to approximate
$M(z)$ and $W(z)$. To parameterize the potential $V(z)$, we decompose it into a
sum of squares of the output layer (size $m=50$) of 1 hidden layer neural
network with 128 hidden neurons and the $\operatorname{ReQUr}$ activation
function \cite{li2019better,yu2020onsagernet}. This is to ensure that the
potential satisfies the correct growth conditions as outlined in
Assump.~\ref{app:assumption1}.

For the diffusion matrix $\sigma(z)$,
since it has no symmetry constraints other than a growth condition,
we use a fully connected neural network to approximate it.
In our polymer dynamics application, we found empirically that a diagonal, $z$-independent diffusion matrix
(corresponding to a linear network with zero weight and trainable diagonal bias) performed the best,
but our algorithm can handle general architectures for $\sigma(z)$.

\textbf{Closure coordinate construction.}
We now provide details of the procedure to construct closure coordinates
$\hat{Z}(t)$ using the time series observation data of chain configuration
coordinates at $\{t_k\}_{k=1}^{N_t}$, with $0=t_0 < t_1 ... <t_{N_t} = T$. The
available data are $\{\mathcal{X}_j\}_{j=1}^M$ with $ \mathcal{X}_j =
\{X(t_i)^{(j)}\}_{i=1}^{N_t} \in \mathbb{M}^{D \times N_t}$, where $M$ is the
number of trajectories and $X(t_i)^{(j)}$ is the $j^\text{th}$ observation
trajectory at $t=t_i$. We obtained 610 observational trajectories, and for each
trajectory, the number of time snapshots is 1001, i.e. $M=610$ and $N_t=1001$.
We reshape the observation data as
$X=[\mathcal{X}_1,\mathcal{X}_2,...,\mathcal{X}_M]$, where $X \in \mathbb{M}^{
    D \times N_t M  }$. We re-center the data, such that the mean of the
training data is 0 for each time snapshot, and set it as $X$.
The covariance matrix of $X$ is $S=Cov(X)$. Denote its eigenvalues by
$\Lambda=diag(\lambda_1,\lambda_2,...,\lambda_D)$ (arranged in non-increasing
order) and corresponding eigenvectors as $V=(V_1,V_2,...,V_D)$.

We use the following PCA-ResNet encoder to find the closure coordinates
\begin{align}\label{eq:PCA_resNet_encoder}
    \hat{Z}(t)= Z_{2:d}(t)=\hat{\phi}(X(t))=P_dX(t)+\operatorname{NN}_{e}(X(t)),
\end{align}
where $P_d=\frac{1}{\sqrt{\Lambda_{1:d-1}}}V_{1:d-1}$ and $\operatorname{NN}_{e}(\cdot)$ is a fully-connected
neural network with input dimension $D$ and output dimension $d-1$.
We can reconstruct the high-dimensional coordinates via the decoder
\begin{align}\label{eq:PCA_resNet_decoder}
    \tilde{X}(t)= \psi(\hat{Z}(t))= P_d^{\dagger}\hat{Z}(t)+\operatorname{NN}_{d}(\hat{Z}(t)),
\end{align}
where $P_d^{\dagger}=V_{1:d-1}^T\sqrt{\Lambda_{1:d-1}}$ and
$\operatorname{NN}_{d}(\hat{Z}(t))$ is another neural network with input
dimension $d-1$ and output dimension $D$. Note that without
$\operatorname{NN}_e,\operatorname{NN}_d$, this amounts to a principal
component analysis (PCA) based coordinate reduction. The combination of PCA and
neural networks combines approximate feature orthogonality and approximation
flexibility.

We construct the reconstruction loss function as $\lvert  X-\tilde{X}\rvert^2$.
We set the reconstruction error of PCA as $E_{pca}=\lvert  X-P_d^{\dagger} P_d
X\rvert^2$. To make the reconstruction error near but less than the
reconstruction error of PCA alone, we add the regularization term
$\mathrm{\mathrm{ReLU}}(\log\lvert  X-\tilde{X}\rvert^2 - \log(E_{pca}))$ in
the loss function, where $E_{pca} = \lvert  X-P_d^{\dagger} P_d X\rvert^2$ is
the reconstruction error of PCA and $\mathrm{ReLU}$ is the rectified linear
unit, i.e. $\mathrm{ReLU}(u) = \max(0, u)$. Thus, the combined reconstruction
loss function is
\begin{align*}
\mathrm{loss}_\mathrm{Rec}= \lvert  X-\tilde{X}\rvert^2+\rho_1  \mathrm{ReLU}(\log\lvert  X-\tilde{X}\rvert^2 - \log\lvert  X-P_d^{\dagger} P_d X\rvert^2),\nonumber\
\end{align*}
where 
$\rho_1$ is a regularization parameter.
The regularization term penalizes
the loss if we observe a reconstruction error
that is higher than PCA.


\textbf{Training algorithm based on maximum likelihood estimation.}
After constructing the structure of the drift term $f(Z(t))=-(M(Z(t)+W(Z(t)))
\nabla V(Z(t))$ and diffusion term $\sigma(Z(t))$, we consider how to construct
the loss function to learn the stochastic dynamics. In deterministic dynamical systems, we
can use the mean square error to learn $f$ given the trajectory observation data.
However, to deal with stochastic dynamics (in particular, learning the
diffusion matrix $\sigma$), we have to devise more general methods based on
maximum likelihood estimation.

We discretize Eq.~\eqref{eq:sto_onsa1} by the Euler-Maruyama scheme, giving
\begin{align*}
    Z(t_{i+1})&=Z(t_i) +f(Z(t_i))\Delta t + \sigma(Z(t_i)) \sqrt{\Delta t}\xi_i,\nonumber
\end{align*}
where $\Delta t$ is the time step, $t_i=i\Delta t$, $i=0,1,...N_t-1$ and $T=N_t\Delta t$.
Here, $\{\xi_i\}_{i=0}^{N_t-1} $ are independent random vectors following the standard normal distribution.

In the training data set, we have access to the microscopic coordinates $X(t)$,
from which we construct the reduced coordinates $\{(Z(t_i)^{(j)},Z(t_{i+1}^{(j)}),\Delta t)\}_{i,j=0,1}^{N_t-1,M}$
via the (to be trained) reduction function $\phi$.
Given the neural networks $f_\theta$ and $\sigma_{\theta}$
(we use the subscript $\theta$ to denote all trainable parameters)
constructed previously, the conditional probability is given by
\begin{align}
    &p(  Z(t_{i+1})^{(j)} \vert   Z(t_i)^{(j)} )\nonumber\\
    =&\mathcal{N}(Z(t_{i+1} )^{(j)}\rvert Z(t_i)^{(j)}  +f_{\theta}(Z(t_i)^{(j)}) \Delta t, \sigma_{\theta}(Z(t_i)^{(j)}) \sigma_{\theta}^T(Z(t_i)^{(j)})\Delta t)\nonumber\\
    =&\frac{1}{ (2\pi\Delta t)^{d/2}\sqrt{\det(\sigma_{\theta} \sigma_{\theta}^T)}}\exp\{  -\frac{1}{2\Delta t}[ Z(t_{i+1})^{(j)} -Z(t_i)^{(j)} -f_{\theta}(Z(t_i)^{(j)} )\Delta t   ]^T\nonumber\\
    &(\sigma_{\theta} \sigma_{\theta}^T)^{-1} [ Z(t_{i+1})^{(j)} -Z(t_i)^{(j)}  -f_{\theta}(Z(t_i)^{(j)})\Delta t   ] \},\nonumber\
\end{align}
where we use the short form $\sigma_{\theta}=\sigma_{\theta}(Z(t_i)^{(j)})$
and $\det$ denotes the determinant of a matrix.

Taking the logarithm of the above equation, we obtain
\begin{align}
    &\log p(  Z(t_{i+1})^{(j)}  \vert   Z(t_i)^{(j)} )\nonumber\\
    =&-\frac{1}{2}\log \det(\sigma_{\theta} \sigma_{\theta}^T) -\frac{\Delta t}{2}\left( \frac{Z(t_{i+1})^{(j)} -Z(t_i)^{(j)}}{\Delta t} -f_{\theta}(Z(t_i)^{(j)} )   \right)^T\nonumber\\
    & (\sigma_{\theta} \sigma_{\theta}^T)^{-1}
    \left( \frac{ Z(t_{i+1} )^{(j)}-Z(t_i)^{(j)}}{\Delta t}  -f_{\theta}(Z(t_i)^{(j)})   \right) +
    \text{constant}.\nonumber
\end{align}
As a result, we may obtain the loss function by maximizing the log-likelihood of the training data
\begin{align*}
    &\mathrm{loss}_{\mathrm{MLE}}\nonumber\\
    =&\frac{1}{N_t M}\sum_{i=1}^{N_t}\sum_{j=1}^{M}\bigg(\frac{1}{2}\log \det(\sigma_{\theta} \sigma_{\theta}^T)+ \frac{\Delta t}{2}\left( \frac{Z(t_{i+1})^{(j)}-Z(t_i)^{(j)}}{\Delta t} -f_{\theta}(Z(t_i)^{(j)} )    \right)^T \nonumber\\
    &(\sigma_{\theta} \sigma_{\theta}^T)^{-1}
    \left( \frac{ Z(t_{i+1})^{(j)} -Z(t_i)^{(j)}}{\Delta t}  -f_{\theta}(Z(t_i)^{(j)})    \right) \bigg).
\end{align*}
The total loss is then
\begin{align}\label{loss_total}
    \mathrm{loss} =\mathrm{loss}_{\mathrm{MLE}}+\rho~\mathrm{loss}_{\mathrm{Rec}},
\end{align}
where $\rho$ is a parameter to balance the accuracy of the
learned dynamics and the error from reconstruction. In our computation, we
first train the loss function~\eqref{loss_total} with $\rho=0.01$. After some
training steps, we fix the encoder part \eqref{eq:PCA_resNet_encoder} and the
decoder part \eqref{eq:PCA_resNet_decoder} of the neural network, and train
$\mathrm{loss}_{\mathrm{MLE}}$ ($\rho=0$) to fine-tune accuracy of the stochastic dynamics.
We use the Adam optimizer for training.
The overall implementation, including the network architectures and loss computation,
is shown in Supplementary Figure~2.

\paragraph{Data preparation}

\textbf{Simulation data.}
We used a Brownian dynamics approach to simulate linear, touching-bead chains as polymer chains in a planar elongational flow (left of Fig.~\ref{fig:unfolding_times}). Each polymer chain consisted of $N = 300$ ($D=3N$) beads with diameter $r$ at positions of bead $i$ $\mathbf{r}_i$, connected by $N-1$ rigid rods of length $b = r = 10$ nm. The governing stochastic differential equation was obtained by considering the following forces acting upon the system: excluded volume, constraint, Brownian and hydrodynamic.

The excluded volume potential characterizes the short-range repulsions between beads and can be described by
$$E^{ev} = -\sum_{i,j}^N \mu r_{ij}\ \ \ \ \   \text{if}\ \  r_{ij} < r \ \text{nm}$$
where $\mu = 35$ pN has been demonstrated to result in a low frequency of chain crossings \cite{Vologodskii2006}. The constraint force is given by
$$\mathbf{F}^c_i = T_i\mathbf{b}_i-T_{i-1}\mathbf{b}_{i-1}$$
where $\mathbf{b}_i$ is the unit vector of bond $i$ and $T_i$ is the tension in rod $i$ that imposes constant bond length. The Brownian forces are random forces that satisfy the fluctuation-dissipation theorem, represented as
$$\langle \mathbf{F}_i^{br}(t) \rangle = 0 \ \ \  \text{and}\ \ \ \langle \mathbf{F}_i^{br}(t) \mathbf{F}_j^{br}(t) \rangle = \frac{2 k_BT \zeta \mathbf{I} \delta_{ij}}{\Delta t}$$ where $\delta_{ij}$ is the Kronecker delta, $\mathbf{I}$ is the identity matrix and $\Delta t$ is the simulation time step. By neglecting hydrodynamic interactions between the beads, the drag force on the $i$th bead is
$$\mathbf{F}_i^d = - \zeta\Big(\mathbf{u(r_i)}-\frac{\text{d}\mathbf{r_i}}{\text{d}t}\Big)$$
where $\zeta \approx 3 \pi \eta r$ is the drag coefficient of a single bead, $\eta$ is the solvent viscosity and $\mathbf{u(r_i)}$ is the unperturbed solvent velocity.

Due to the small mass of the beads, it is common to neglect inertial effects and set the sum of forces on the beads to be zero. Hence, the Langevin equation that describes the motion of each bead along the chain is
$$\frac{\text{d}\mathbf{r_i}}{\text{d}t} = \mathbf{u(r_i)}+\frac{1}{\zeta}\Big(\mathbf{F}^{ev}+\mathbf{F}^c+\mathbf{F}^{br}\Big)$$
We employed a predictor-corrector scheme~\cite{Liu1989} to determine the position of each bead at every time step. The enforcement of rigid rod constraints leads to a system of nonlinear equations for  the rod tensions $T_i$, which we solved for using Newton's method.

For each simulation run, the polymer chain was allowed to equilibrate for $10^4 \  \tau_d$, with $\tau_d = b^2\zeta/k_BT$ being the characteristic rod diffusion time. During equilibration, the chain would adopt random configurations as governed by thermal fluctuations. At $t = 0$, the chain was subjected to a planar elongational flow of the form $\mathbf{u}(\mathbf{r_i}) = \dot{\epsilon}(\mathbf{ \hat{x}-\hat{y}}) \cdot \mathbf{r_i}$, where $\dot{\epsilon}$ is the strain rate and $\mathbf{\hat{x}}$ and $\mathbf{\hat{y}}$ are unit vectors parallel to the $x$ and $y$ axes respectively. The simulations were run until $t = 10^4\ \tau_d$, using a time step of $\Delta t = 5 \times 10^{-4}\ \tau_d$. To generate training data, we simulated 610 stretching trajectories.
To test the predictions, we simulated 500 stretching trajectories each for three different initial chain configurations, which were deliberately selected from three vastly different trajectories. For each trajectory, we obtain the $(x,y,z)$ positional coordinates $\mathbf{r_i}$ of $N$ beads every 10 $\ \tau_d$.
Given the observation data, we can get the chain extension (Fig.~\ref{fig:unfolding_times}).
We note that time reported hereon is in units of $\tau_d$. Although the data set in this work is generated based on known equations, it should be highlighted that our machine learning approach for constructing the reduced dynamical model and all resulting consequences are independent of the microscopic model used in the simulations, with only the positional coordinates as inputs into the S-OnsagerNet. In other words, the approach used is purely data-driven and can therefore be generally applied to other non-equilibrium problems.

\textbf{Experimental data.}
We provide the details of the experimental validation of
the S-OnsagerNet results.

\textit{Experiments on electrokinetic stretching of DNA.}
To collect the experimental data leading to Fig.~\ref{fig:experimental}, we
needed to create an automated single molecule stretching trap. We describe the
essential features of the experiments, while the reader can find additional
details about the trap and the material preparation methodology in Soh et al.~\cite{Soh2023}.
The polymer samples used for this study were $T4$ phage
double-stranded DNA ($165.6$ kbp, Nippon Gene), chosen for high monodispersity
and ready availability. The DNA was diluted in buffer solution, and
fluorescently labelled (YOYO-1) to aid visualization.

The trap was based on a microfluidic cross-slot channel device with a wide
central chamber, splitting into 40 $\mu m$ wide channels in each of the four
cardinal directions (Supplementary Figure~3). Each of the four
channels was terminated by a macroscopic reservoir, in which we pipetted the DNA
sample and inserted platinum wire electrodes. These electrodes were connected to
a computer-controlled analog voltage source, such that the North and South
electrodes were grounded, but East and West were positively biased $V_0=+30V$.
This electric quadrupole arrangement set up a potential well in the North-South
direction and a potential hill in the East-West direction. The saddle point was
located in the central chamber.

In aqueous solution, DNA is naturally negatively charged, and in the
microfluidic device, molecules drifted electrokinetically from the North and
South reservoirs towards East and West. At the saddle point, the East-West
potential hill could be exploited to stretch a DNA molecule, but being an
unstable equilibrium point, molecules would approach its location and slow down,
but could not remain there for long observation times without external
intervention.

To actively trap and observe single DNA molecules at the saddle point, the
microfluidic electrokinetic device was placed on an inverted fluorescence
microscope (Nikon Eclipse $\mathrm{Ti}2\mathrm{U}$) with a $60\mathrm{X}$
oil-immersion objective ($1.4\mathrm{NA}$). A live fluorescence image was
captured by a sCMOS camera (Teledyne Photometrics Prime $95\mathrm{b}$) at 50ms
intervals and sent to a desktop computer that analyzed the incoming images in
real time. The image analysis included a fast cleanup step (detailed in the next
section) that removed background noise and stray `passer-by'
molecules/fragments, followed by a step to calculate the intensity centroid of
the target molecule and its projected length. The displacement $x$ of the
molecule’s centroid from the saddle point was input to a proportional feedback
loop that output a voltage tilt $\Delta \mathrm{V} \propto x$, which was then
superimposed on the East-West electrode biases, such that $\Delta \mathrm{V}
=\mathrm{V}_0 \pm Gx$. Setting $G=2.2 \ \mathrm{V}/\mu \text{m}$ confined the
DNA centroid to within $1 \ \mu \mathrm{m}$ of the saddle point even while a
molecule stretched. This feedback process for tracking and trapping DNA
molecules was automated through a custom LabVIEW program.

In a single stretch experiment, the platform was programmed to actively search
for a molecule to trap as they flowed through the central chamber. Once a
molecule was trapped, $\mathrm{V}_0$ was set to zero temporarily to allow the
molecule to relax into an un-stretched, equilibrium state over $10\ \mathrm{s}$
(chosen based on experience from preliminary experiments). After this relaxation
period, recording began with images being streamed to the computer's solid-state
drive. With every image, the associated centroid coordinates, projected length,
voltages, and other parameters were logged in real time. $\mathrm{V}_0$ was then
reset to $+30 \mathrm{V}$, which re-established the East-West potential hill and
stretched the DNA molecule. By monitoring the projected length history, the
platform could recognize when the molecule was fully stretched, and in response
stop the recording and release the molecule to escape naturally towards East or
West. Using this protocol, we were able to capture a diverse set of stretching
trajectories, including the dumbbell and folded conformations that were selected
for analysis in Fig.~\ref{fig:experimental}.

\textit{Real-time image processing.}
A DNA molecule deforms over time, and the purpose of this section is to
elucidate our method used to extract sequential image snaps that can be used to
capture the DNA unfolding. In order to capture the exact location of the DNA
molecule, we devised an algorithm to classify pixels according to the density of
a pixel's immediate surrounding. The algorithm exploited the fact that the
targeted molecule would very likely be centered in the tracking
Region-Of-Interest (ROI), and this was used to discriminate against stray
particles during tracking. To make this algorithm fast enough for real time, the
algorithm operated only on the binarized version of the raw image, after
application of a threshold. The result was a binary image containing the
targeted DNA pixels at the exclusion of pixels belonging to noise speckles and
stray particles. See Supplementary Figure~4 as an illustrated example of the real
time image processing pipeline.

One would imagine that calculating the intensity centroid is a natural method
for defining the molecule location and tracking it across image frames in time.
However, doing so with the raw DNA images is problematic for the two following
reasons:

First: Background noise from the camera biases the centroid towards the center
of the frame. This occurred even though we subtracted a pre-recorded background
frame. The problematic noise in this case arises from pixel shot noise
(electronic, photon) and pixel readout, and manifests as low-intensity sparkles
in the image. One possible solution is to increase the excitation intensity, but
this caused the DNA to photocleave much more readily, too often breaking into
fragments before stretching out fully.

Second: While a trapped DNA molecule was being stretched, stray molecules
continued to drift into the camera's field of view, which biased the centroid
towards these stray molecules. This problem was partially solved by calculating
the centroid from a smaller ROI that was just large enough to cover the
fluctuating motion of a single molecule. This ROI was software-based and
centered on the molecule centroid for tracking, i.e. implemented as a subset of
the image array. The reduced ROI size avoided most of the stray particles, but
it was still common for some to enter the tracking ROI during a trap-stretch
cycle.

\textit{Data selection.}
Among the stretching trajectories collected, a few were selected for further
analysis.  Specifically, examples of molecules adopting the ``dumbbell'' and
``folded'' conformations during the stretching process with similar initial
extension lengths were chosen in order to benchmark to human labeling.  The
trajectories of the conformations as the molecules were stretched were plotted
in the reduced coordinates space (Fig.~\ref{fig:experimental}).  When in the
fully stretched stable state, the DNA molecules still undergo conformational
fluctuations due to Brownian motion, which result in fluctuations in the reduced
coordinate space. Images of molecules in this stable state were analyzed to
obtain the fluctuations in $Z_1$, $Z_2$ and $Z_3$, as plotted in Fig.~\ref{fig:experimental}.
The method to extract the learned thermodynamic
coordinates $Z_2,Z_3$ from experimental images are described next.

\textit{Extracting reduced coordinates from filtered images.}
Different from the simulated data described earlier, given a filtered image, it
is more challenging to obtain in a robust way a sequence of ordered coordinates
representing the location of each part of the DNA molecule. However, identifying
the two end points is easier.  Thus, we first find the coordinates of the two
end points and also the centre of mass (weighted by intensity) of all
illuminated pixels (see Supplementary Figure~5).  Then, we
estimate $Z_2,Z_3$ by computing the end-to-end distance and foldedness as
defined in Fig.~\ref{fig:Z2_Z3_meaning}, which are shown to strongly correlate
with $Z_2$ and $Z_3$ respectively.  These computations only require the relative
position vectors of the two end points with respect to the centre of mass
($r_{1}=(r_{1,x},r_{1,y},0)$ and $r_{N}=(r_{N,x},r_{N,y},0)$).  The third
coordinate is set to $0$ because the DNA molecule is confined to have limited
motion in this direction.  The center of mass $r_{\text{cm}}$ is computed by
\begin{align}
    r_{\text{cm}}=\frac{1}{\bar{I}}\sum_{m,n}u_{m,n} I_{m,n},\nonumber
\end{align}
where $\bar{I}=\sum_{m,n}I_{m,n}$, $u_{m,n}$ is the spatial position of the
$(m,n)$-pixel, $I_{n,m}$ is the intensity of the image at pixel $(m,n)$.  This
allows for the computation of $r_1$ and $r_N$ (see Fig.~\ref{fig:filter low}).
According to the definition of end-to-end distance and foldedness and their
linear correlation with $Z_2, Z_3$ (Fig.~\ref{fig:Z2_Z3_meaning}), we may
compute the reduced coordinates as
\begin{align}
    Z_1&=C_1 L,\nonumber\\
    Z_2&=C_1C_2 (r_{1,x}-r_{N,x}),\nonumber\\
    Z_3&=C_1[C_3(r_{1,x}+r_{N,x})+C_4],\nonumber
\end{align}
where $L$ is the extension length scale
of the experimental data,
and the parameters $C_1$, $C_2$, $C_3$ and $C_4$ are scaling parameters
to account for the change in molecule configuration properties (e.g. length scales)
from simulation to experimental data.
In particular
\begin{itemize}
    \item
    $C_1$ is obtained by dividing the average extension of unfolding simulated data by that of experimental data.
    \item $C_2$ is the relationship between $Z_2$ and end-to-end distance (see Fig.~\ref{fig:Z2_Z3_meaning} (B)),
    and is obtained by dividing
    the average value of $Z_2$ in simulated data by that of the end-to-end distance.
    \item $C_3$ and $C_4$ is obtained by the relationship between the $Z_3$ and foldedness.
    We use least squares method to get $C_3$ and $C_4$ (see Fig.\ref{fig:Z2_Z3_meaning} (E)).
\end{itemize}
The method of extraction differs in the plot of Fig.~\ref{fig:experimental} (J).
Here, we only consider the stretched state, for which
we can extract 300 coordinates (assuming the third coordinate $z=0$)
equally spaced along the stretched polymer.
A scaling along the x-axis is performed so that
the average full extension of the experimental polymer data
matches that of the simulation data.
These coordinates are then fed into the trained
PCA-ResNet to extract the $Z_2$ and $Z_3$ values.

\paragraph{Polymer dynamics analysis}

In this section, we detail the from the modelling of the polymer
unfolding problem using the S-OnsagerNet.

\textbf{Accurate Prediction of the statistics of unfolding.}
In our training process, we have 610 training trajectories and 110
testing trajectories. Note that although the dataset is
generated by known yet complex microscopic equations, our approach does not
require, nor rely on, the knowledge of these equations.
The chain extension evolution of the training data
with different initial chain are shown in Supplementary Figure~6(A)
(black) and the test data are shown in Supplementary Figure~6(E) (black).
After training the model, we predict the extension of the polymer (red).
We can see the extension of the polymer can be predicted well.
We also compute the mean (Supplementary Figure~6(B,F)),
standard derivation (Supplementary Figure~6(C,G)) and the distribution of unfolding time
(Supplementary Figure~6(D,H)) of the training and test results. We also compute
the error of the mean (relative $L^2$ error), standard derivation, and the
probability distribution of unfolding time of the training and test results
in Supplementary Table~1.
We observe that our model successfully captures the
statistical behavior of a polymer stretching
with only a 3-variable dynamical system.

\textbf{Interpreting the learned closure coordinates.}
Supplementary Figure~7 shows the evolution of chain extension
and the two learned closure coordinates with time for the training data. The
trajectories are colored by unfolding time. Based on the unfolding times, we
observe that chains with similar initial extensions (determined by the
y-intercept, i.e., $Z_1$ at $t=0$) can take vastly different times to
stretch, hence it is not sufficient to consider only chain extension for the
purposes of predicting dynamics.
However, the successful prediction of the
statistics of unfolding dynamics implies that $Z_2$ and $Z_3$ capture crucial
information of the system. Thus, we seek to gain some physical understanding of
the learned closure coordinates $Z_2$ and $Z_3$.

Here, we provide details on the analysis of the closure coordinates ($Z_2,Z_3$)
that characterize the stochastic evolution of the extension length ($Z_1$),
Recall that $Z_2, Z_3$ are deterministic functions of the microscopic
configuration $X$, and the functions are approximated by a trained PCA-ResNet.
A useful property of we can exploit is differentiability of the neural network.
We can ask: given a certain configuration $X$, what kind of small perturbations
to $X$ will most drastically increase or decrease the value of $Z_2$ and $Z_3$?
In other words, we can consider perturbations in the directions of $\pm
\partial Z_2 / \partial X$ and $\pm \partial Z_3 / \partial X$ respectively.
We first analyze the closure coordinate $Z_2$.  From
Fig.~\ref{fig:Z2_Z3_meaning}(A), we observe that perturbations
in the direction of $Z_2$ tend to change the end-to-end distance in the
elongational axis (distance between the first and the last bead in the polymer
chain along the elongational direction, i.e. $\vert r_{N,x}-r_{1,x}\vert$), where
$r_{j,i}$ is the $i^\text{th}$ coordinate of the $j^\text{th}$ bead in the
chain (cyan: $Z_2=0.453$, blue: $Z_2=0.194$, and black: $Z_2=0.323$).  We
confirm this hypothesis by visualizing the correlation of the end-to-end
distance and the magnitude of $Z_2$ in Fig.~\ref{fig:Z2_Z3_meaning}(B,C).
Generally, we observe that as $\vert Z_2\vert $ decreases, the
distance between the chain ends (marked by red points in the figure) decreases.
Thus, we can interpret the first learned closure coordinate as an indicator of
end-to-end distance.

We proceed with a similar analysis for the other closure coordinate $Z_3$.
Fig.~\ref{fig:Z2_Z3_meaning}(D) shows a given chain
configuration and perturbations in the positive and negative directions of
$\frac{\partial Z_3}{\partial X}$ (cyan: $Z_3=7.903$, blue: $Z_3=1.595$, and
black: $Z_3=4.749$).  Here, we observe that the end-to-end distance is largely
unchanged, but the degree of foldedness of the chain in the elongational axis
of the flow ($x$ direction) appears to change.  This leads us to hypothesize
that the second learned coordinate represents a degree of foldedness with
respect to the elongational flow. As a measure of the degree of foldedness of a
chain, we compute $\vert r_{1,x}+r_{N,x}\vert $.  During data pre-processing,
the chain is centered such that its center-of-mass is 0. Hence, if $\vert
r_{1,x}+r_{N,x}\vert $ is small, the polymer is symmetric around 0 in the
elongational $x$ direction and tends to be in the elongated state. If $\vert
r_{1,x}+r_{N,x}\vert $ is large, the polymer is likely to be in the folded
state. We plot $\lvert  r_{1,x}+r_{N,x}\rvert $ as a function of $\vert
Z_3\vert $ for all configurations in the training data set in
Fig.~\ref{fig:Z2_Z3_meaning}(E). The strong correlation
between $\vert r_{1,x}+r_{N,x}\vert $ and $\lvert  Z_3\rvert$ supports our
interpretation that the second learned closure coordinate is an indicator of
the degree of foldedness in the elongational direction. To demonstrate this, we
plot in Fig.~\ref{fig:Z2_Z3_meaning}(F) visualizations of
different chains with a range of $\lvert  Z_3\rvert $ values, with the chain
ends marked by red points. As $\lvert  Z_3\rvert$ decreases, we observe the
chains generally shift from the folded to the elongated state. We note that the
degree of foldedness is sufficiently described solely by considering the
projection of chain coordinates onto the elongational axis of the flow, as the
flow is stable in the compressional axis and thus the degree of foldedness is
primarily relevant to the unstable elongational axis that drives the unfolding
process.

\textbf{Advancing classification methods for polymer stretching.}
With the new understanding of the closure variables, we now consider how our
results improve the current understanding of polymer chain dynamics. In the
landmark experimental study of dilute polymer chains stretching under
elongational flow, the molecules were categorized into seven different
conformations and the dynamics of dominant conformations were
analyzed~\cite{Perkins1997}. Specifically, it was found that chains in the
``folded'' conformation (which is one of the seven categories) took the longest
time to stretch, while chains in the dumbbell conformation stretched relatively
quickly (Supplementary Figure~8).

Our analysis shows that instead of a categorical labelling, it is perhaps more
useful to characterize the stretching dynamics of a polymer by three numbers
representing the generalized coordinates: $Z_1$ (extension length), $Z_2$
(related to end-to-end distance), and $Z_3$ (related to foldedness in the flow
direction). Our results show that these are sufficient to predict the dynamics
of the chain extension. We show that this characterization is largely
consistent with previous categorical ones, but improves upon them in some
intermediate cases. In Supplementary Figure~9, we plot the values of the
low dimensional coordinates $\lvert  Z_2\rvert $ and $\lvert  Z_3\rvert$ of
different chains with initial dumbbell and folded configurations at selected
chain extension $Z_1$ values, colored by the predicted unfolding times. We
observe segregation between the folded and dumbbell configurations in the
$Z_2-Z_3$ space, indicating that the qualitative differences between different
conformations can be captured by our characterization. Generally, the folded
chains take a longer time to stretch compared to the dumbbell chains. This is
consistent with experimental and computational observations reported in the
literature~\cite{Perkins1997,Smith1998,Larson1999}. However, we highlight that
the region with high $\lvert  Z_2\rvert $ and low $\lvert  Z_3\rvert $ values
encompasses a mix of folded and dumbbell chains with similar unfolding times.
Therefore, while the broad categorization scheme allows for coarse
discrimination of the stretching dynamics, the qualitative classification does
not allow for finer predictions. Instead of classifying the stretching
trajectories based on qualitative, human judgement of chain conformation during
the process, we present a robust, quantitative approach to interpreting the
stretching dynamics that involves consideration of the initial chain
configuration in reduced dimensions.

\textbf{Free energy landscape analysis.}
We now provide details on the analysis of the free energy
landscape. We begin with an important technical note. Our learned GSOP
following Eq.~\eqref{eq:sto_onsa1} is in general not guaranteed to be a
gradient system, unless $W(Z)=0$ and $M(Z)=I$. However, since the drift term
$f(Z)=(f_1(Z),f_2(Z),f_3(Z))^T=-(M(Z)+W(Z)) \nabla V(Z) $, the stationary
points of $V$ are also critical points of the dynamics ($f(Z) = 0$). The saddle
points of $V$ are the \emph{saddle foci} of the non-gradient dynamics $\dot{Z}
= f(Z)$. For simplicity, we refer to a saddle point of $V$ and saddle focus of
$f$ interchangeably.

We compute the critical points by numerically solving $\nabla V(Z) = 0$ with
the BFGS method from different initial conditions.
We obtain four critical points:
$(247.15 , 1.701, 0.193)$ (yellow point),
$(247.29 , -1.700, -0.166)$ (blue point),
$(87.858, -0.041, \\-4.269)$ (cyan triangle) and
$(85.887 , -0.050,  4.126)$ (magenta triangle),
which are shown in Fig.~\ref{fig:potential}. The Jacobian matrix of the drift
term is then computed
\[J(f)= \left(    \begin {array}{cccc}
    \partial_{Z_1}f_1 &\partial_{Z_2}f_1&\partial_{Z_3}f_1\\
    \partial_{Z_1}f_2 &\partial_{Z_2}f_2&\partial_{Z_3}f_2\\
    \partial_{Z_1}f_3 &\partial_{Z_2}f_3&\partial_{Z_3}f_3\\
    \end{array} \right).  \]
We calculate the eigenvalues and eigenvectors of $J$ at the
four critical points. It has three eigenvalues with negative
real parts at the yellow and blue points, so these are the stable
points (stable nodes). We name them $Z_{\text{stable},1}$ and
$Z_{\text{stable},2}$. For the cyan triangle and magenta diamond points, $J$
has one eigenvalue with positive real part, and two with negative
real parts. These points are saddle points (saddle focus) of index 1. We call
these two points $Z_{\text{saddle},1}$ and $Z_{\text{saddle},2}$. The direction
of the eigenvector corresponding to the eigenvalue with positive real part
is the unstable manifold, along which the trajectories escape from the saddle.
On the other hand, the span of
the real and imaginary parts of the other eigenvector (one must be a complex
conjugate of the other) constructs the stable manifold, along which
trajectories are attracted to the saddle.

The local behavior of the learned potential $V(Z)$ around its critical points
characterize the typical fluctuations around them. There lies important
physical meaning that we can probe.
For example, let us exploit automatic differentiation
of neural networks and expand $V(Z)$ in a Taylor series
around $Z_{\text{stable},1}$ (fully stretched state),
corresponding to $(247.15 , 1.701, 0.193)$.
Neglecting small terms, we obtain the approximate
formula
\begin{equation}\label{SIeq:stable_state}
    \delta V \approx 153.1(\delta Z_1-1.54\delta  Z_2)^2+205.5\delta Z_2^2+36.96\delta Z_3^2,
\end{equation}
where $\delta Z_i=Z_i-[Z_{\text{stable},1} ]_i$ is the fluctuations in the thermodynamic variables.
Now, assuming that the distribution of states
around $Z_\text{stable}$ follows a Boltzmann distribution
$Z\sim \exp[- V(Z) / {k_B T}]$ (here we assume that $M\sim I$)
where $\vert\sigma\vert^2 \sim k_B T$,
the typical small fluctuations $\delta V$ is proportional to $k_B T$.
In other words, Eq.~\eqref{SIeq:stable_state} is approximate ``isotherms'' that captures
the form of typical fluctuations.
For example, we can infer from the formula that
typical fluctuations of the extension length ($Z_1$) and
the end-to-end distance ($Z_2$) are highly correlated.
This is sensible, since in the fully stretched state,
these two quantities are expected to change simultaneously.
In Fig.~\ref{fig:potential}(G,H,I), we confirm these correlations.

Similarly, one can also expand $V(Z)$ around a saddle point
(fully folded state)
$Z_{\text{saddle},2}$.
We obtain the formula
\begin{equation*}
    \delta V \approx 102.96\delta Z_1^2-31.13(\delta Z_2-0.255\delta  Z_3)^2+24.16\delta Z_3^2,
\end{equation*}
Here, we immediately observe that to escape the saddle point
(lowering the energy),
one should increase end-to-end distance ($Z_2$) and decrease foldedness ($Z_3$).
This approximately aligns with the unstable manifold described above,
and forms the basis of our control protocols described
in the main text.

\textbf{Selection of the dimension of the reduced coordinates.}
In this part, we describe how we arrived at the selection of a 3-variable
reduced coordinate space. We tested various different number of reduction
dimensions ($d=2,3,4$) and the relative errors of predictions are summarized in
Supplementary Table~1. We observed that going to a higher dimension
$(d=4)$ did not result in noticeable gains. In fact, increasing dimension may
cause increasing optimization error and hence worse results in standard
deviation. Going to a lower dimension $(d=2)$ resulted in increased prediction
error in general.

Interestingly, we can formulate a physical argument that suggests that a
2-dimensional system (with only 1 additional closure coordinate) is not a
suitable reduced model for the polymer dynamics we study. The argument is based
on index theory for 2-dimensional dynamical
systems~\cite{strogatz2018nonlinear}.

Let us assume for the sake of contradiction that
the projected free energy landscape of our learned 3-dimensional system
into $Z_1-Z_2$ plane is the free energy of a system in 2-dimensional
(Supplementary Figure~10).
Since $Z_1$ is the polymer extension, we expect that
\begin{enumerate}
    \item
    There exists 2 stable states at large $Z_1$
    corresponding to the fully extended state.
    There are 2 of them due to reflection symmetry
    in the flow direction.
    Around this $Z_1$ value, all
    trajectories in the reduced space
    should converge to one of the stable steady states.
    \item
    There cannot be saddle points with the same $Z_1$
    value, since it is close to the maximal extended chain length.
\end{enumerate}
These conditions are enough to imply a contradiction in the following way.  The
simple concept we use from index theory is the definition of the index of a
closed curve in the phase space of a 2-dimensional dynamical system (See
Strogatz~\cite{strogatz2018nonlinear}, Chapter 6.8 for details).  Let $\Gamma$
be a closed curve in $\mathbb{R}^2$, and consider a dynamical system $\dot{z} =
f(z)=(f_1(z),f_2(z))^T$, $z=(z_1,z_2)^T\in\mathbb{R}^2$.  The \emph{index} of
$\Gamma$ with respect to the dynamics $f$ is defined as
\begin{align}\label{eq:index}
    I_\Gamma(f) := \frac{1}{2\pi}\oint_\Gamma \frac{f_1df_2-f_2df_1}{f_1^2+f_2^2}.
\end{align}
Intuitively, this is the sum of the angles of the force vectors across
the curve $\Gamma$.

From index theory, we know that the index of a closed curve must equal to the
sum of indices of critical points it encloses.  Moreover, the index of a stable
critical point is $+1$, and the index of a saddle is $-1$.  Now, we draw a curve
enclosing the two stable points as shown in Supplementary Figure~10.
By condition 1 above, the vector fields should point inwards towards the interior of the
curve, thus we can show (from Eq.~\eqref{eq:index}) that the index of this curve
is $+1$.
However, it can only enclose stable critical points due to condition
2, and consequently its index is at least $+2$.  Thus we arrive at a
contradiction, showing that the dynamical landscape in
Supplementary Figure~10 is not possible in 2 dimensions.

The only remaining possibility is that $Z_2$ does not distinguish the two
symmetric stable extended states.  However, in this case to allow a saddle point
(which is the key to inducing heterogeneity in unfolding) we must have another
stable steady state at a different $Z_1$ value.  This is unlikely from a
physical viewpoint for non-self-interacting polymers, since we expect the only
asymptotically stable state to be the stretched, fully extended state.

\textbf{The impact of dataset size on predictive accuracy of S-OnsagerNet.}
In order to study the impact of the number of training data on the
computational accuracy of the model, the original training dataset containing
610 trajectories was split into two subsets containing 25\% and 75\% of the
data. A third subset was produced by selecting at random 50\% of the data from
the original dataset. The three datasets and the learned potential landscapes
are presented in Supplementary Figure~11. The 25\% and 75\% datasets
have no common trajectories, whilst the 50\% dataset contains some trajectories
from each of the other two datasets. The quantitative results are reported in
Supplementary Table~2, where the trained models are used to predict 500
unseen trajectories with fast, medium and slow unfolding times, suggest that
with a smaller amount of training data, the S-OnsagerNet model loses some
predictive accuracy. However, the more important factor is the diversity of the
data contained in the training datasets. The full dataset has a high proportion
of trajectories with fast, followed by middle (slower), and then slow unfolding
times, however it is the most balanced of all the training datasets (25\%,
50\%, 75\%, 100\%), in addition to containing the most data. Training with the
25\% dataset, which contains the largest proportion of trajectories with fast
unfolding times results in the lowest $L^2$ error, but the model over-fits to
that type of trajectories, and has the largest $L^2$ error for the medium and
slow unfolding times. On the other hand, the 50\% dataset is the most balanced
of the reduced datasets, which is reflected in the relative $L^2$ errors. It
can also be observed in Supplementary Figure~11 (bottom) that the
potential landscapes resulting from training with smaller datasets do not
contain a clear stagnation (saddle) point. Overall, for training datasets with
different number of trajectories and their respective proportions of fast,
medium and slow trajectories, the results, both in terms of prediction error
and potential landscape characteristics suggest that the model is relatively
robust, and the amount of training data cannot be reduced much without
affecting predictive performance.

\paragraph{Spatial epidemics analysis}

In this section, we provide details of our analysis method for an alternative
application of S-OnsagerNet -- modelling the macroscopic dynamics of the spread
of epidemics. This highlights the general applicability of our method.

We focus on the most well-known model for disease spread in a spatial domain -
the spatial SIR model~\cite{murray2001mathematical}. Let us consider a
two-dimensional square domain (representing a city, say) discretized into $n
\times n$ sectors. We use $I_{i,j}$ and $S_{i,j}$ represent the number
(density) of infective and susceptible individuals at spatial location $(i,j)$.
The basic mechanism of the model is as follows: each infective individual may
infect a susceptible individual in the same spatial location. At the same time,
each infective recover (or is removed) at a rate, after which they are no
longer infective. Finally, both infective and susceptible individuals move on
the spatial domain randomly. Mathematically, the temporal evolution of $I,S$
(understood as $n \times n$ matrices, or length $n^2$ vectors) are governed by
the following dynamics
\begin{align}\label{eq:2d_spatial_sir}
     \dot{S}_{i,j}=&-\beta I_{i,j}S_{i,j}
     +\frac{\delta}{\delta_x^2} (S_{i-1,j}-2S_{i,j}+S_{i+1,j}) \nonumber\\
     &+\frac{\delta}{\delta_y^2} (S_{i,j-1}-2S_{i,j}+S_{i,j+1})+\sigma \dot{B}_1(t),  \nonumber\\
     \dot{I}_{i,j}=&\beta I_{i,j}S_{i,j}-\gamma I_{i,j}+\frac{\delta}{\delta_x^2} (I_{i-1,j}-2I_{i,j}+I_{i+1,j})\nonumber\\
     &+\frac{\delta}{\delta_y^2} (I_{i,j-1}-2I_{i,j}+I_{i,j+1})+\sigma \dot{B}_2(t).
\end{align}
As usual, the dot denotes time derivative. The parameter $\beta$ is a measure
of the transmission efficiency of the disease from infectives to susceptibles,
and $1/\gamma$ is the life expectancy (or expected recovery time) of an
infective. The constant $\delta$ is the diffusion coefficient, and this term in
the equation models the spatial movement of individuals as a diffusion process
over the domain. The last terms of the equation models the stochastic
fluctuations of the number densities, with  $\sigma$ as the noise intensity.
The parameters $\delta_x$ and $\delta_y$ are the spatial discretization sizes in the two
spatial directions. In our simulations, we take $n=16$, $\beta = 0.3$, $\gamma
= 0.13$, $\delta = 0.5$, $\sigma=0.03$ and $\delta_x=\delta_y=\frac{2}{3}$.
Eq.~\eqref{eq:2d_spatial_sir} governs the microscopic dynamics of disease
spread, and non-trivial outcomes can result from different initial spatial
configurations and parameters (e.g. infection rate, recovery rate). See
Extended Data Figure 1.

While this microscopic model and its variants has been subject to intense study
(see Murray~\cite{murray2001mathematical} and references therein), a
macroscopic understanding of the dynamics of disease spread is challenging due
to the complex spatial interactions. For example, one may be interested to
model the dynamics of average (or total) number of infective and susceptible
individuals over the spatial domain. Observe in Extended Data Figure 1 that
configurations with identical initial spatial averages of infective and
susceptible individuals can have drastically different subsequent evolution.
More precisely, one spatial configuration Extended Data Figure 1(A) can lead to
initial disease spread (epidemic), where the mean number of infected
individuals initially increases sharply, whereas another spatial configuration
Extended Data Figure 1(B) leads to the disease dying out monotonically. Thus,
it is of interest to develop a thermodynamic description to capture and
elucidate the driving factors of such variations.

\textbf{Modelling the thermodynamics of the spatial SIR model using S-OnsagerNet.}
Following our framework, we now set the macroscopic variables of interest as
the respective spatial averages $Z_1 =\delta_x \delta_y \sum_{i,j}^n I_{i,j}$ and $Z_2 =\delta_x
\delta_y \sum_{i,j}^n S_{i,j}$. Recall from Extended Data Figure 1 that these
variables alone are insufficient to determine their subsequent evolution. Our
goal is to learn closure variable(s), and a stochastic dynamics that describes
the evolution of these variables.

Training data is generated through integrating Eq.~\eqref{eq:2d_spatial_sir}
using the Euler-Maruyama method with time step size $dt=0.03$. The initial
conditions are selected as $I_{i,j}(0)=5 e^{-(x_i-x_0)^2-(y_j-y_0)^2}$ and
$S_{i,j}(0)=5 e^{-(x_i-1)^2-(y_j-1)^2}+5 e^{-(x-1)^2-(y+2)^2}$, where $x_0$ and
$y_0$ are randomly generated from a uniform distribution in the unit square,
and $x_i=-5+i \delta_x $ and $y_j=-5+j \delta_y$. This initial condition
corresponds to the scenario where the initial susceptible population is fixed,
but the initial infective population is a cluster that is uniformly and
randomly distributed in the domain. Then, we carry out the S-OnsagerNet
workflow as shown in Fig.~\ref{fig:workflow}, with $\beta=0.01$,
$\alpha=0.001$, $m=50$ and $d=3$ (i.e. one closure coordinate). We employ a
PCA-encoder network to obtain the closure coordinate. The training process
involves initially training the PCA-encoder and the S-OnsagerNet
simultaneously. Subsequently, we fix the former and continue training the
latter to a desired error tolerance. Note that in the SIR model case, we do not
utilize the decoder network as there is no need to obtain a reconstruction of
$I$ and $S$.

\textbf{Capturing the stochastic dynamics.}
First, we show that with just one additional learned closure coordinate $Z_3$,
we can capture the statistics of the macroscopic dynamics of the spatial
averages of infected and susceptible individuals. The results are shown in
Extended Data Figure 1(D,E), where true mean and standard deviation of $Z_1$ and
$Z_2$ are obtained from Eq.~\eqref{eq:2d_spatial_sir}, while the predicted
results are derived from the S-OnsagerNet. Four representative test initial
conditions are shown: two with disease spread and the other two without. We
observe that we can successfully capture the macroscopic dynamics of disease
spread with just one additional closure coordinate.

\textbf{Interpreting the closure coordinate.}
Next, since we have shown that only one closure coordinate is required for a
thermodynamic description, it is natural to probe its physical meaning. We use
the same technique described in the polymer stretching case, where we
investigate the effect of perturbation of a microscopic state in the directions
that cause the sharpest changes in $Z_3$. The microscopic state we probe is
chosen as a pair of partially overlapping clusters of infective and susceptible
individuals (Extended Data Figure 2(A)). We observe that the perturbations
induced by $dZ_3/dX$ (where $X=(I,S)$) correspond to increasing/decreasing the
overlap of clusters of susceptible and infected individuals. Hence, this
suggests that $Z_3$ is a macroscopic descriptor that correlates with such
effective spatial overlap. We confirm this hypothesis via a scatter plot in
Extended Data Figure 2(B), where the overlap is defined by $IS_{mean}=\delta_x
\delta_y\sum_{i,j}^n I_{i,j}S_{i,j}$. Thus, we can interpret the closure
coordinate $Z_3$ as an indicator of the overlap of clusters of susceptible and
infected individuals. This is physically sensible, since the measure of overlap
of spatial clusters should determine the outcome of an epidemic. Nevertheless,
we must emphasize that the learned $Z_3$ is a quantitative measure and can be
applied to more complex configurations than a pair of clusters, for which one
may not be able to easily define a notion of effective overlap by empirical
observation.

\textbf{Free energy landscape.}
Finally, we study the dynamical landscape of the learned S-OnsagerNet model.
Using the same projection technique in the polymer case, we plot 2D projections
of the learned 3D free energy landscape in Extended Data Figure 3, overlaid
with two representative trajectories (with disease spread in blue and
disease dying out in red).

A number of interesting features can be gleaned from the landscape. First, we
can clearly see the origins of the divergence of the two different types of
trajectories. While they have identical initial $Z_1$ (average infective) and
$Z_2$ (average susceptible) values, their initial $Z_3$ (infective/susceptible
overlap) values differ (Extended Data Figure 3(B,C)). In particular, the
initial disease spread seen in the red and blue trajectories is approximately
in accordance to the steepest descent of the energy landscape. That is, the
initial $Z_3$ value is a determining factor for the onset of epidemics. Second,
we compute using the steady states of the dynamics corresponding by solving
$\nabla V(Z) = 0$. Instead of isolated steady states as in the case of polymer
dynamics, we find a 1D manifold of stable steady states in the $Z_2-Z_3$ plane
(see Extended Data Figure 3(C,F)). This implies in particular that the
terminal state (the remaining number of susceptible individuals) is not unique,
but rather depends on the initial configuration, and in particular on the
initial overlap described by the learned coordinate $Z_3$. This rationalizes
the observed heterogeneity in the terminal configurations as shown in
Extended Data Figure 1.

\paragraph{Statistics \& reproducibility}

The train-test splits in this paper are performed by random uniform
sub-sampling. Repeats of numerical experiments are performed by running the
same code with different random seeds. No data were excluded from the analyses,
and the investigators were not blinded to allocation during experiments and
outcome assessment.

\section*{Data availability}

The simulation and experimental datasets used are publicly available in the
Harvard Dataverse public repository~\cite{soh_2023_data}.
The simulation data was generated according to the methods
introduced in Methods under data preparation.
Source Data for Figures
\ref{fig:unfolding_times}, \ref{fig:Z2_Z3_meaning}, \ref{fig:potential},
\ref{fig:design}, Extended Data Figures 1, 2 and 3 is available with this manuscript.

\section*{Code availability}

Code to reproduce the analysis generated within the study is
provided at \url{https://github.com/MLDS-NUS/DeepLearningCustomThermodynamics}~\cite{chen_2023_10212239}.

\section*{Acknowledgements}

This research is supported by the Ministry of Education, Singapore, under its Research Centre of Excellence award to the Institute for Functional Intelligent Materials (I-FIM, project No. EDUNC-33-18-279-V12 KSN). KSN is grateful to the Royal Society (UK, grant number RSRP \textbackslash R \textbackslash 190000 KSN) for support.
QL acknowledges support from the National Research Foundation, Singapore, under the NRF fellowship (project No. NRF-NRFF13-2021-0005 QL).
HY acknowledges support from the
National Natural Science Foundation of China under Grant No. 12171467 HY and 12161141017 HY.
KH, BWS and Z-EO acknowledge support from the Accelerated Materials Development for Manufacturing Program at A*STAR via the AME Programmatic Fund by the Agency for Science, Technology and Research under Grant No. A1898b0043 KH. The computational work for this article was partially performed on resources of the National Supercomputing Centre, Singapore.
The funders had no role in study design, data collection and analysis, decision to publish or preparation of the manuscript.
The authors thank Chang Jie Leong for help with experimental preparation.

\section*{Author contributions statement}
X.C., B.W.S., K.S.N, K.H. and Q.L. conceptualized the project and developed the
methodologies. X.C. and B.W.S. conducted the numerical experiments. B.W.S.,
E.V-.G. and Z.-E. O. set up and conducted the physical experiments.
X.C., H.Y. and Q.L. conducted the theoretical investigations.
All authors co-wrote the paper.

\section*{Competing interests statement}
Authors declare that they have no competing interests.

\section*{Figures}

\begin{figure}[H]
    \centering
    \includegraphics[width=\textwidth]{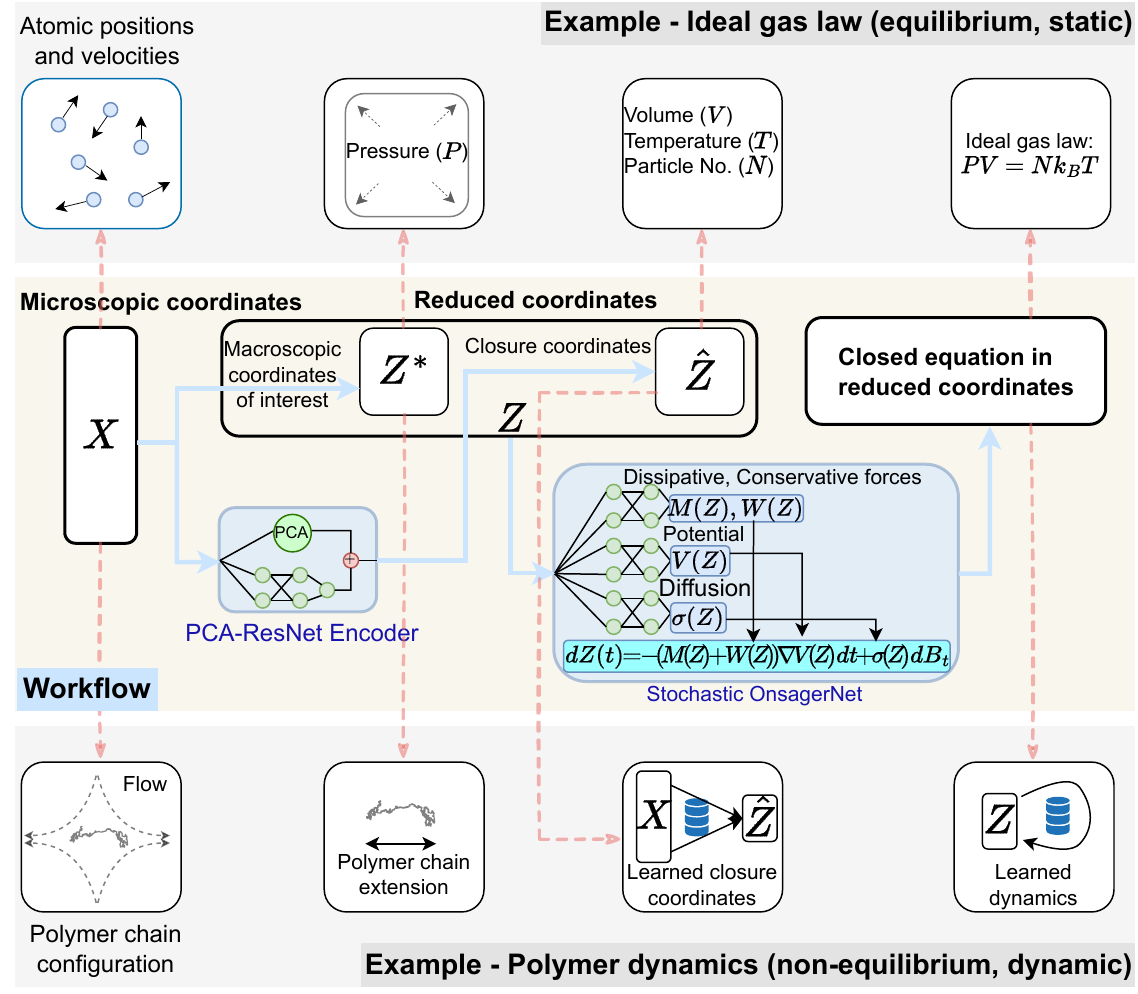}
    \caption{
        \textbf{Overall workflow of the proposed approach.}
        Given a complex system described by $X$, the goal is to model the behavior of macroscopic coordinates of interest  $Z^*$. We construct closure coordinates $\hat{Z}$ and closed (dynamical) equation on the combined reduced coordinates $Z=(Z^*,\hat{Z})$.  The classical ideal-gas law is an illustration of this process (top panel), where $k_B$ is the Boltzmann constant. For general non-equilibrium, dynamic systems (bottom panel), carrying out this workflow from theoretical analysis is challenging. Our machine learning method (mid panel) addresses this by simultaneously constructing the closure coordinates using PCA-ResNet (see Methods), and governing equations on reduced coordinates using the S-OnsagerNet with drift term $-(M(Z)+W(Z))\nabla V(Z)$ and noise term $\sigma(Z)$ (see Eq.~\eqref{eq:sto_onsa}).
    }
    \label{fig:workflow}
\end{figure}

\begin{figure}[H]
    \centerline{\includegraphics[width=\textwidth]{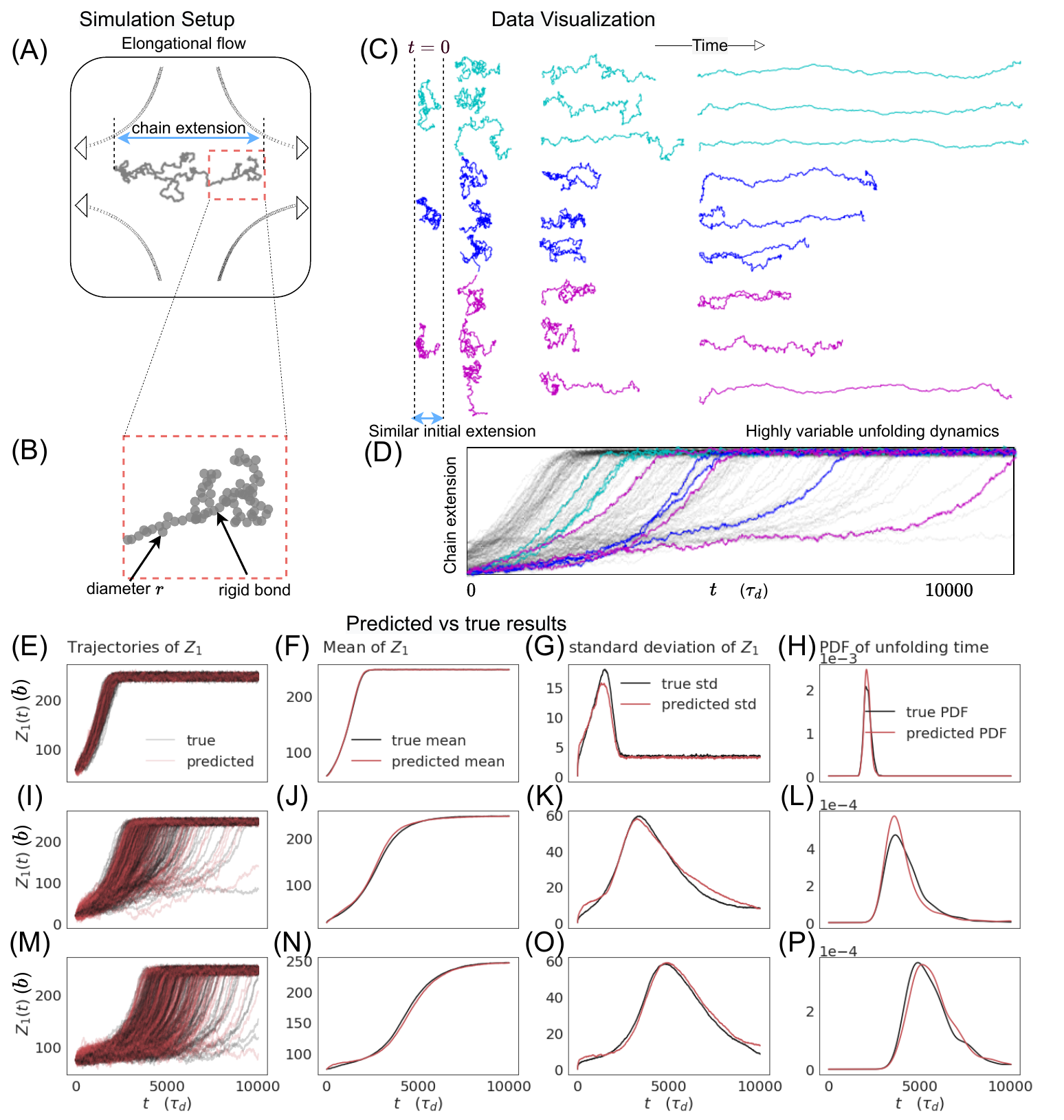}}
    \caption{
        \textbf{Simulation setup, data visualization, and predicted vs true stretching trajectories and their statistics.}
        (A,B) The polymer chain is represented by a bead-rod model with bead diameter $r$ (in units of $b=10\text{nm}$) and rigid bonds, subjected to hydrodynamic and Brownian forces. (C,D) The statistics of chain extension projected along the elongational axis are recorded. Times are reported in units of the characteristic rod diffusion time $\tau_d$ (see data preparation in Methods).
        Different initial conditions (colors) are chosen to have similar initial extension but varying unfolding times.
        Identical initial configurations also have different unfolding dynamics due to thermal fluctuations.
        In (E-P), we show that S-OnsagerNet can capture this
        heterogeneity. We plot in (E) 500 trajectories of polymer extensions ($Z_1$) from the same initial condition, together with their mean (F) and standard deviation (G). The probability density functions (PDF) of unfolding times is shown in (H). (I-L) and (M-P) show results for two other unseen initial chain configurations.
    }
    \label{fig:unfolding_times}
\end{figure}

\begin{figure}[H]
    \centerline{\includegraphics[width=\textwidth]{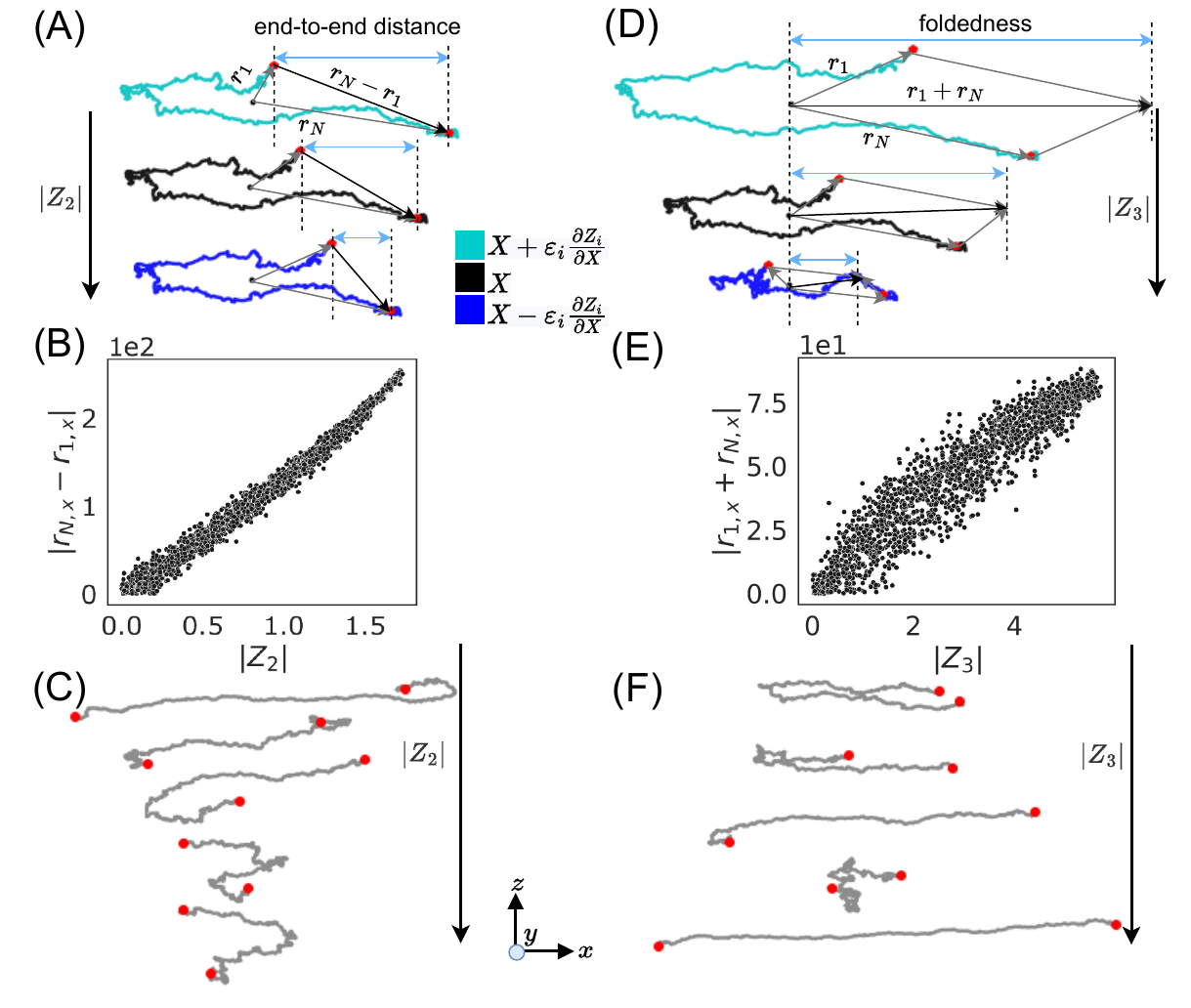}}
    \caption{
        \textbf{Physical interpretation of learned closure coordinates.}
        (A) Perturbation of the polymer chain $X\pm \varepsilon_2 \frac{\partial Z_2}{\partial X}$ from a given configuration, $\varepsilon_2=100/\|\frac{\partial Z_2}{\partial X}\|_{2}$. (B) Plot of projected end-to-end distance $\lvert  r_{N,x}-r_{1,x}\rvert $ as a function of $\lvert  Z_2\rvert $ for the training data. (C) Configurations of different polymer chains with decreasing $\lvert  Z_2\rvert $ values. As $\lvert  Z_2\rvert $ decreases, the projected distance between the chain ends decreases. (D) Perturbation of the polymer chain $X\pm \varepsilon_3 \frac{\partial Z_3}{\partial X}$ from a given configuration, $\varepsilon_3=260/\|\frac{\partial Z_3}{\partial X}\|_{2}$. (E) Plot of degree of foldedness $ \lvert  r_{1,x}+r_{N,x}\rvert $ as a function of $\lvert  Z_3\rvert $ for the training data. (F) Configurations of different polymer chains with decreasing $\lvert  Z_3\rvert $ values. As $\lvert  Z_3\rvert $ decreases, the chains tend towards a more stretched state.
    }
    \label{fig:Z2_Z3_meaning}
\end{figure}

\begin{figure}[H]
    \centering
    \includegraphics[width=\textwidth]{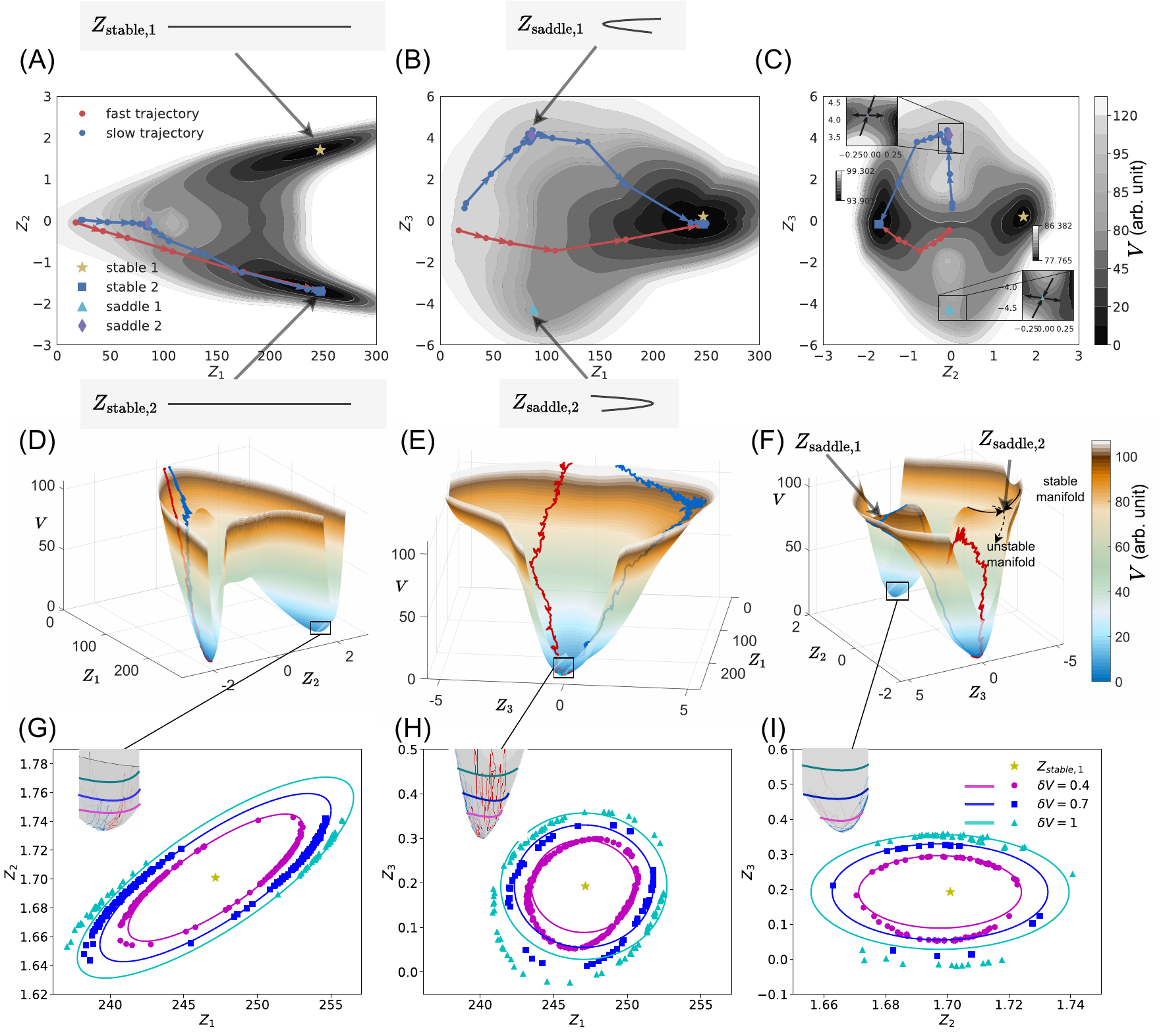}
    \caption{
    \textbf{ Learned potential energy landscape.}
        We plot $V$ projected onto $Z_1-Z_2$ (A,D), $Z_1-Z_3$ (B,E) and $Z_2-Z_3$ (C,F) planes (inset: stable and unstable directions of the saddle points). Projection is computed via minimization (e.g. $V(Z_1,Z_2) = \min_{Z_3} V(Z_1, Z_2, Z_3)$), which at low temperatures closely approximates marginalizing with respect to the Boltzmann distribution. The stable and saddle points are marked on the energy landscape, and their corresponding reconstructed fully extended and folded states are shown. A pair of each exists due to reflection symmetry in the flow direction. Example “fast” (red) and “slow” (blue) trajectories from the training data set are overlaid on the landscape. The fast trajectory avoids the saddle points and goes directly towards a stable minimum, whereas the slow trajectory gets trapped for long times near saddle 2, before finally escaping through its unstable manifold. For (B,E), the stable manifolds of the saddles closely align with $Z_2$, and hence are not visible due to minimization (marginalization). (G-I) Scatter plot together with predicted isotherms (solid lines) capturing typical fluctuations around a fully stretched state $Z_\text{stable,1}$. The insets are magnified views of the fluctuating trajectories around the stretched state over the energy landscape.
    }
    \label{fig:potential}
\end{figure}

\begin{figure}[H]
    \centerline{\includegraphics[width=\textwidth]{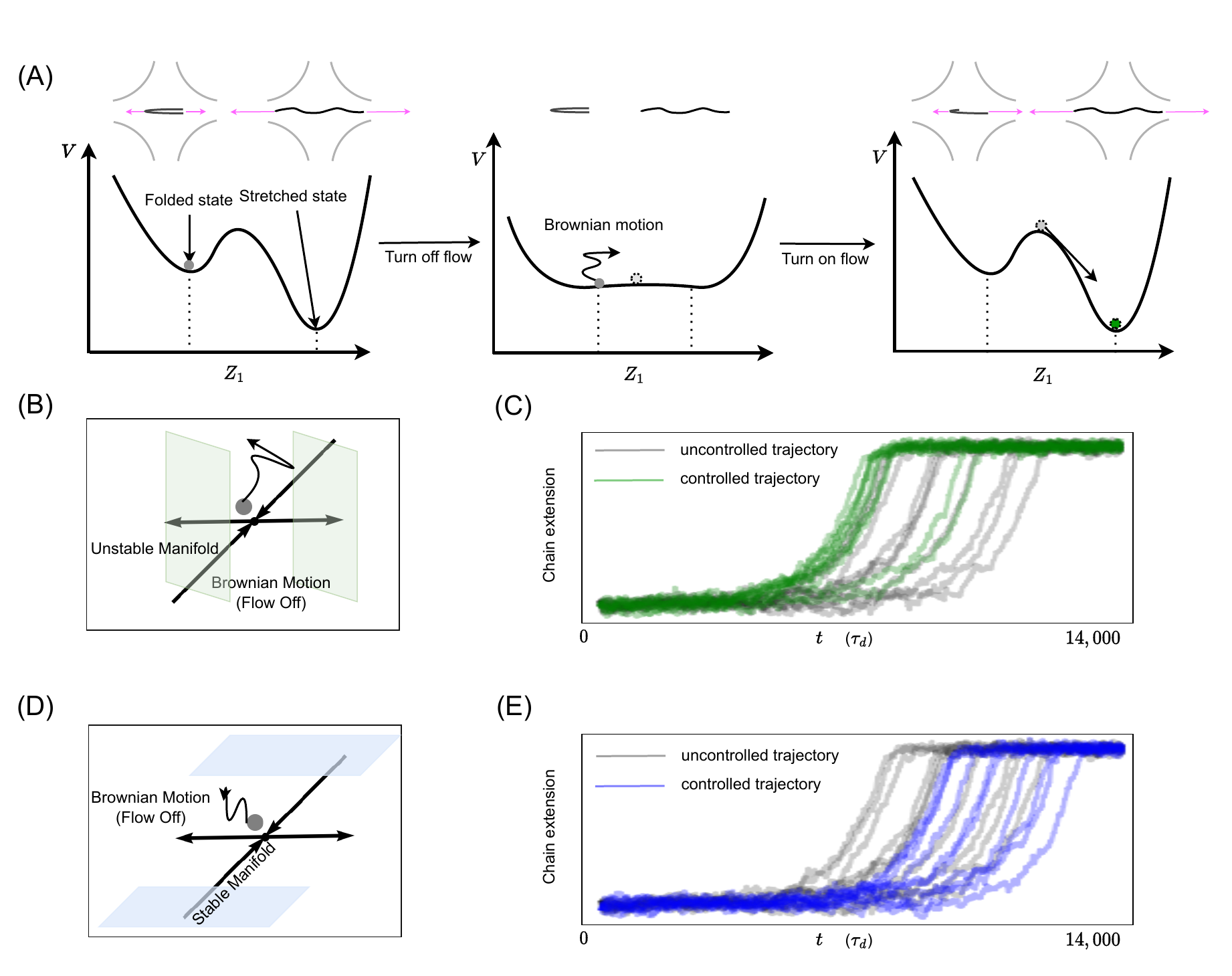}}
    \caption{
        \textbf{Data-driven control of the stretching dynamics.}
        (A) Control protocol to speed up unfolding. Projected along the extension direction, the polymer must overcome energy barriers to transition from the folded to the stretched state (left). In the control protocol, the flow is turned off if the reduced coordinate of the polymer is near the saddle point corresponding to the folded state, and the polymer drifts under the Brownian motion (middle). Then the flow is turned back on if the reduced coordinates of the polymer becomes sufficiently aligned (green shaded region in (B)) with the unstable manifold of the saddle or when the the equilibration time reaches $100\tau_d$ (right). Without any control, a folded state near the saddle point will unfold slowly (grey lines in (C)); with control, the chains unfold more rapidly (green lines in (C)). For the slowest 10 trajectories shown, their mean unfolding time was reduced by $14.14\%$. (D,E) Reversed control to slow down unfolding by turning on the flow when the reduced coordinates become aligned with the stable manifold instead (blue shaded region in (D)). The mean unfolding time increased by $14.96\%$ (blue lines in (E)).
    }
    \label{fig:design}
\end{figure}

\begin{figure}[H]
    \centerline{\includegraphics[width=\textwidth]{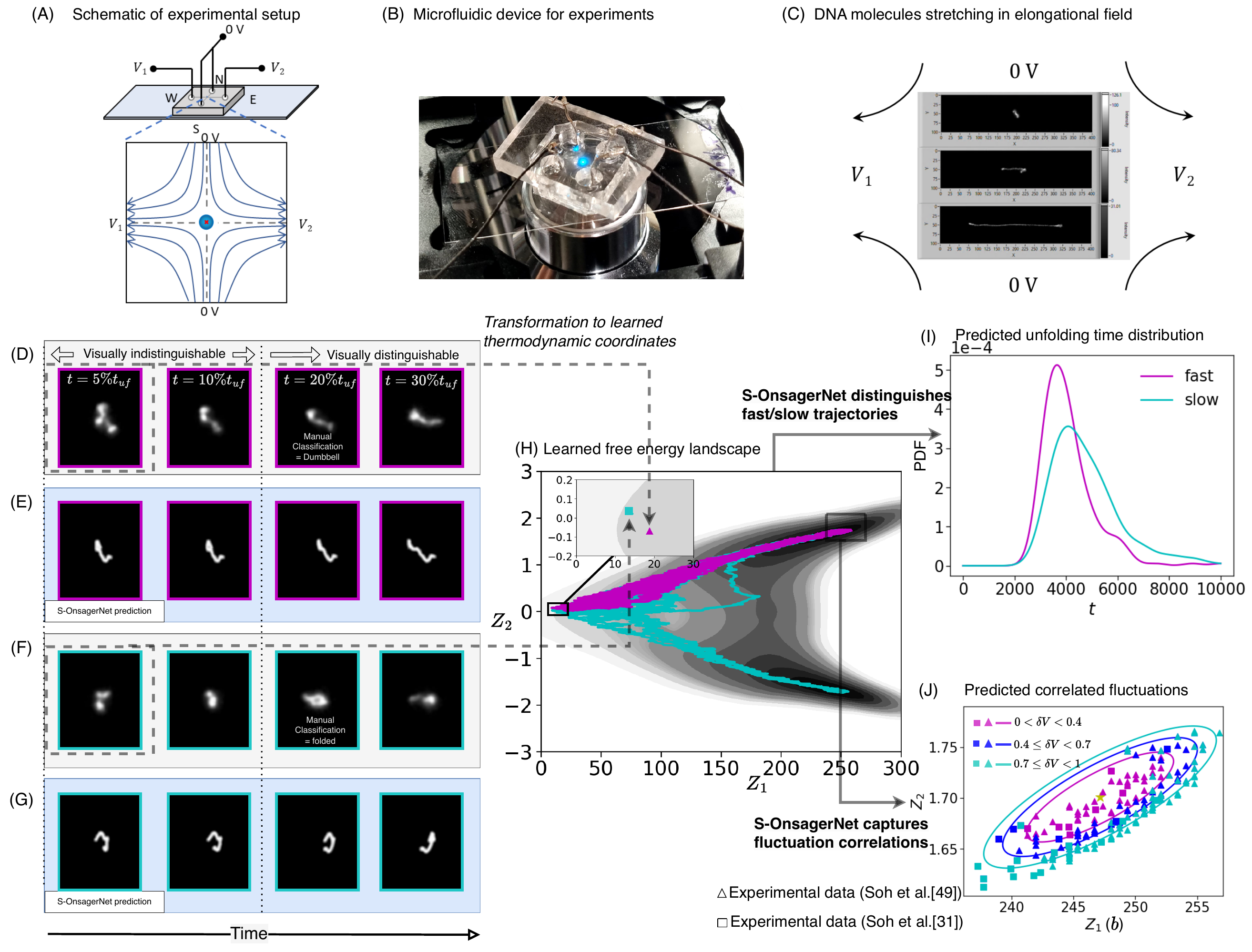}}
    \caption{
        \textbf{Analysis of experimental data.}
        (A-B) Schematic and photo of experimental setup, consisting of a microfluidic cross-slot device and platinum electrodes in the reservoirs. Via the electrodes, positive voltage levels $V_1$ and $V_2$ are applied to the West and East reservoirs (W and E, respectively), whilst the North and South reservoirs (N and S, respectively) are kept at 0V. During trapping, the negatively charged DNA will thus flow from N/S to W/E, until a molecule is trapped at the center of the channel - blue dot in the schematic. (C) Snapshots of a DNA molecule stretching. (D,F) Processed experimental images at various percentages of the unfolding time ($t_{\text{uf}}$). The selected trajectories have similar initial configurations and are visually indistinguishable in terms of unfolding dynamics. (H) Learned potential landscape and predicted slow and fast trajectories using S-OnsagerNet with the initial configurations at $t=5\% t_{uf}$ from (D,F). We note slight differences in the initial $Z_2$ values only (inset). (E,G) Reconstructed high dimensional configurations of selected simulated trajectories with similar low dimensional coordinates as the experimental configurations in (D,F). The S-OnsagerNet is capable of distinguishing between the manually classified dumbbell and folded states. (I) Predicted probability density functions of unfolding time with initial condition of fast and slow experimental trajectories at $t=5\% t_{uf}$ using S-OnsagerNet. (J) Fluctuations in $Z_1$ and $Z_2$ around the stable (stretched) state from experimental images. Data was obtained from ``Soh et al.~\cite{Soh2023}'' (triangle)
        and ``Soh et al.~\cite{soh2018knots}'' (square).
        Reduced coordinates are constructed according to the procedure outlined in data preparation in Methods.
        Colors indicate predicted energy
        levels according to the learned
        potential.
        We observe that
        fluctuations in $Z_1$ and $Z_2$ are highly correlated and agree well with that predicted by the effective equation of state.
    }
    \label{fig:experimental}
\end{figure}

\setcounter{figure}{0}
\makeatletter
\renewcommand{\figurename}{Extended Data Figure}
\makeatother

\begin{figure}[H]
\centerline{\includegraphics[width=0.9\textwidth]{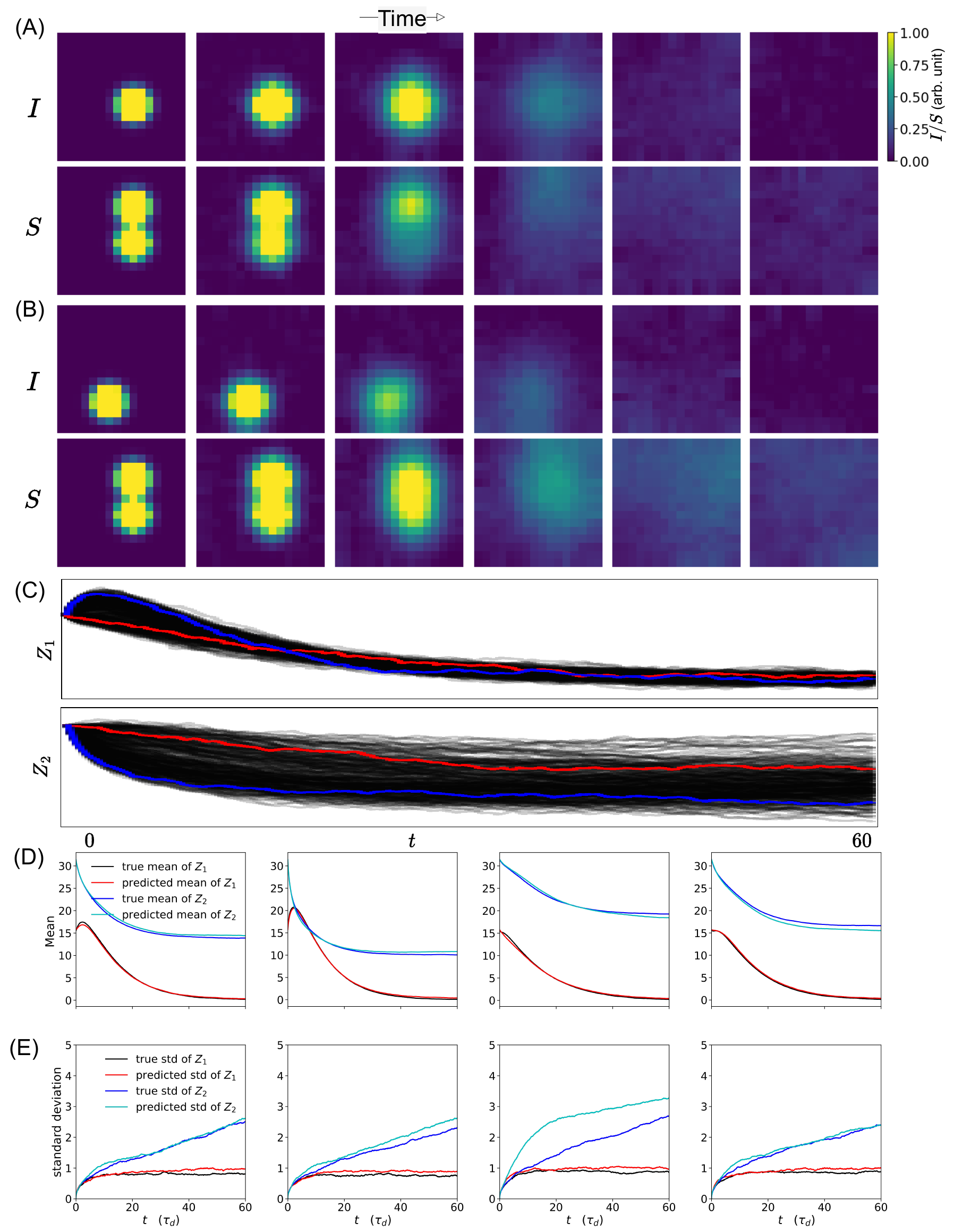}}
\caption{Data visualization and prediction results.
(A) and (B): Two trajectories depicting the spatial evolution of $I$
(infective) and $S$ (susceptible) with distinct initial conditions are plotted.
They have identical spatial averages initially but differing subsequent evolution.
In particular, in (A) the disease spreads ($Z_1$, the spatial average of $I$
increases initially) but in (B) the disease dies out monotonically.
(C) Scatter of $Z_1$ and $Z_2$ (spatial average of $S$) trajectories,
showing a high degree of variability despite identical initial values.
Note that there is variability in both the presence of disease spread
($Z_1$ increasing initially)
and the terminal value of $Z_2$,
corresponding to the remaining uninfected population
after the epidemic.
The blue (resp. red) trajectory corresponds to (A) (resp. (B)).
(D,E) True vs predicted statistics using S-OnsagerNet, showing good agreement.}
\label{fig:sir_dynamics_example}
\end{figure}


\begin{figure}[H]
\centerline{\includegraphics[width=\textwidth]{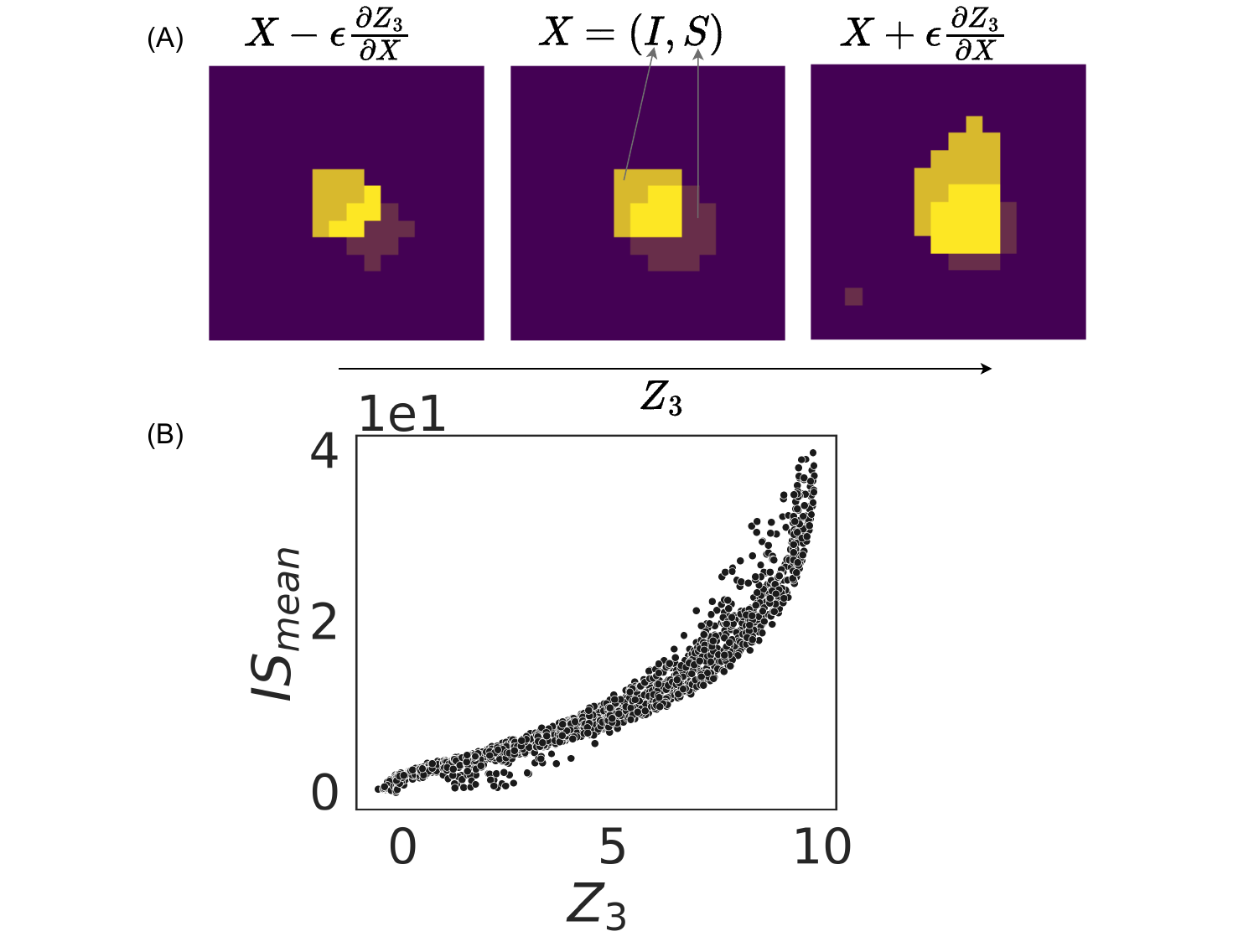}}
\caption{Physical interpretation of learned closure coordinate. (A) Perturbation of $X\pm \epsilon \partial Z_3 / \partial X$ from a given configuration $X=(I,S)$ with $\epsilon=2$.
The spatial configuration of I and S
are overlapping clusters,
in the form of Gaussians centered  at $(0,0)$ for $S$ and $(-1,1)$
for I.
To illustrate the overlap clearly,
we binarize the values as follows:
if the value of $I$ and $S$ is greater than 1,
we truncate it to 1; otherwise,
if the value is less than 1, we truncate it to 0.
Observe from (A) that
increasing $Z_3$ corresponds to increasing
spatial overlap of the clusters, and vice versa.
We confirm this in (B), where we plot
$IS_{mean}$ (spatial overlap) vs $Z_3$,
showing a positive correlation.
}
\label{fig:SIR_meaning}
\end{figure}

\begin{figure}[H]
\centerline{\includegraphics[width=\textwidth]{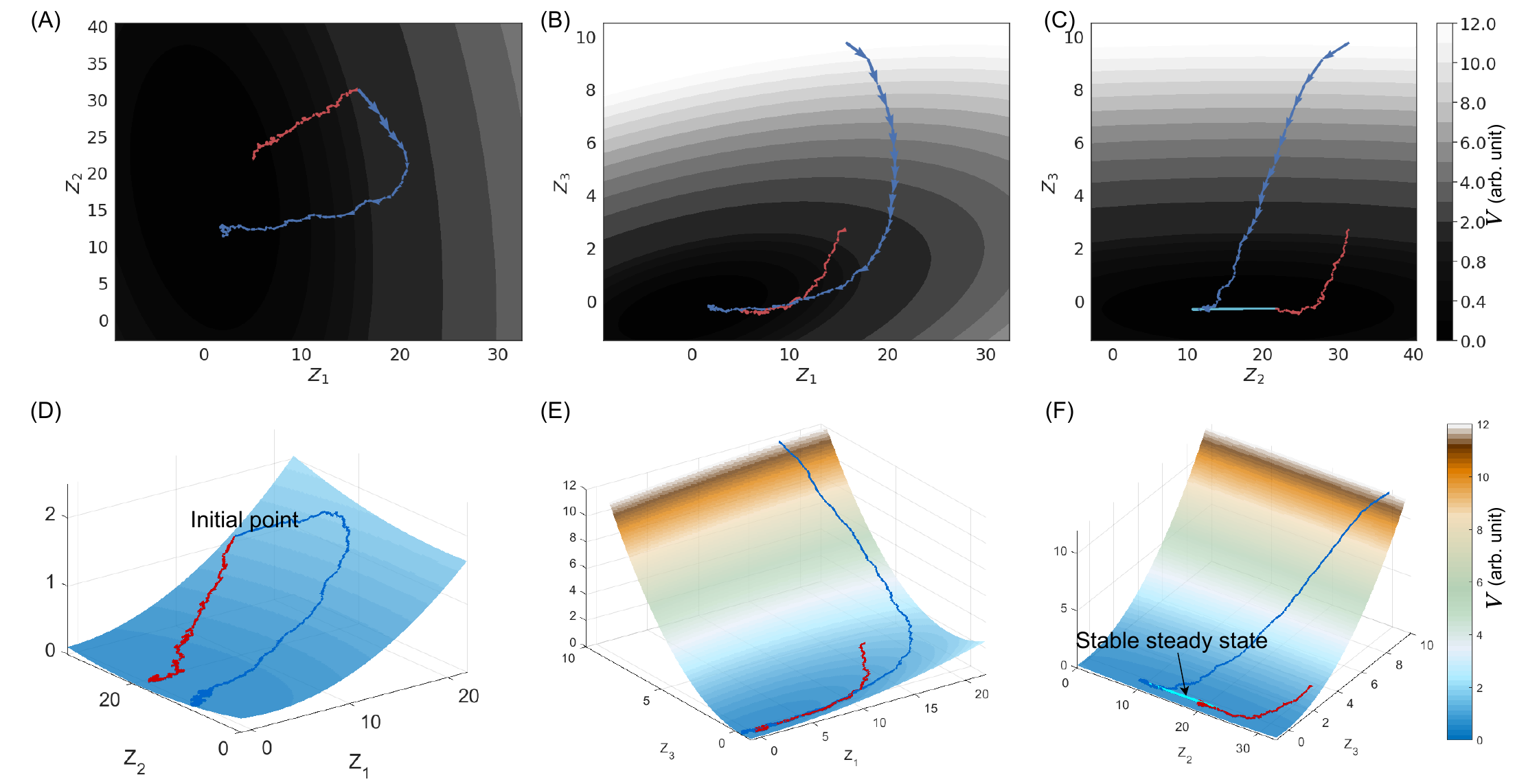}}
\caption{Potential landscape of the SIR model. Projected onto $Z_1-Z_2$ (A,D),
      $Z_1-Z_3$ (B,E) and $Z_2-Z_3$ (C,F) planes. Projection is computed via minimization (e.g. $V(Z_1,Z_2) = \min_{Z_3} V(Z_1, Z_2, Z_3)$). Example of disease spread (blue) and disease dying out (red) trajectories with same initial condition $Z_1$ and $Z_2$ from the training data set are shown.
      We observe from (B) that $Z_3$ determines the onset of disease spread
      and differentiates the two trajectories.
      Moreover, (C) shows that $Z_3$ also differentiates
      the final outcome of the epidemics,
      where the final $Z_2$ value depends on the initial $Z_3$
      value,
      and belongs to a 1D manifold of stable steady states
      as shown in (F).
      }
\label{fig:SIR_potential}
\end{figure}





\appendix
\section{Supplementary figures}

\setcounter{figure}{0}
\makeatletter
\renewcommand{\figurename}{Supplementary Figure}
\makeatother

\begin{figure}[H]
\centerline{\includegraphics[width=\textwidth]{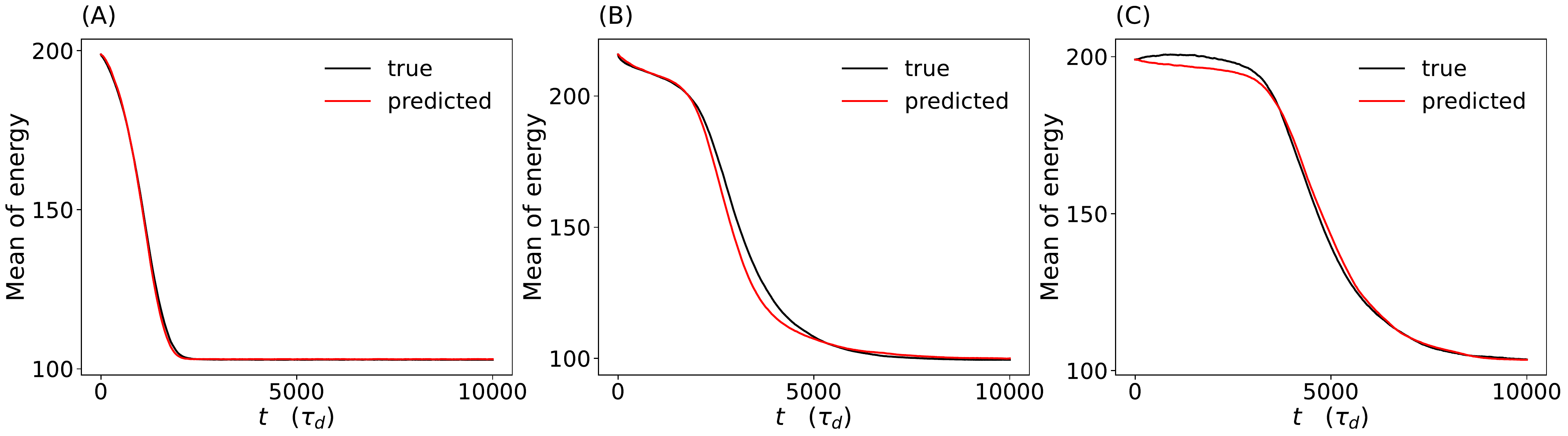}}
\caption{ \textbf{The time evolution of the mean of the energy $V$ for polymer dynamics.}
(A) fast trajectory; (B) medium trajectory; (C) slow trajectory.}
\label{mean_potential}
\end{figure}

\begin{figure}[H]
\centering
\includegraphics[width=\textwidth]{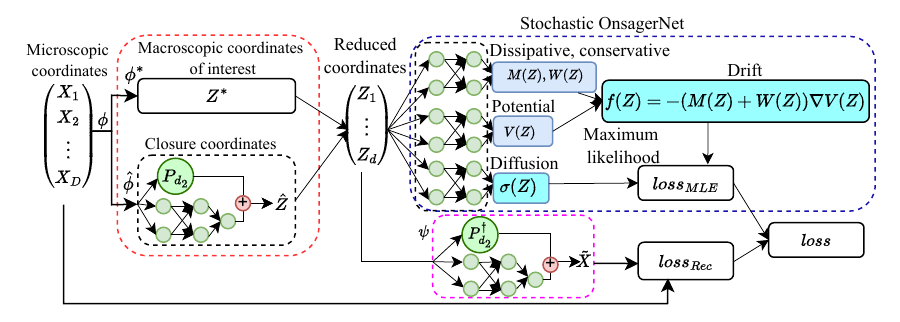}
\caption{ \textbf{Detailed S-OnsagerNet workflow.} The input data
    $X(t)=(X_1(t),\cdots,X_D(t))^T\in \mathbb{R}^D$ are the microscopic
    coordinates. The red box contains the components that discovers reduced
    coordinates $Z(t)=(Z^*(t),\hat{Z}(t))^T=(Z_1(t),\cdots,Z_d(t))^T\in
    \mathbb{R}^d$, where $\phi^*$ is known, and $\hat{\phi}$ is PCA-ResNet with
    $P_d$ the PCA projection matrix (to the first $d-1$ principal components).
    The blue box encloses the main S-OnsagerNet architecture to learn the low
    dimensional stochastic dynamical system. The function $\psi$ is a decoder
    neural network with output $\tilde{X}$, where $P^{\dagger}_d$ is the
    pseudo-inverse of $P_d$. The reconstruction error $\mathrm{loss}_{\mathrm{Rec}}$ and the
    maximum likelihood loss $\mathrm{loss}_{\mathrm{MLE}}$ are combined to obtain the total loss}
\label{NN}
\end{figure}

\begin{figure}[H]
    \centering
    \includegraphics[width=8cm]{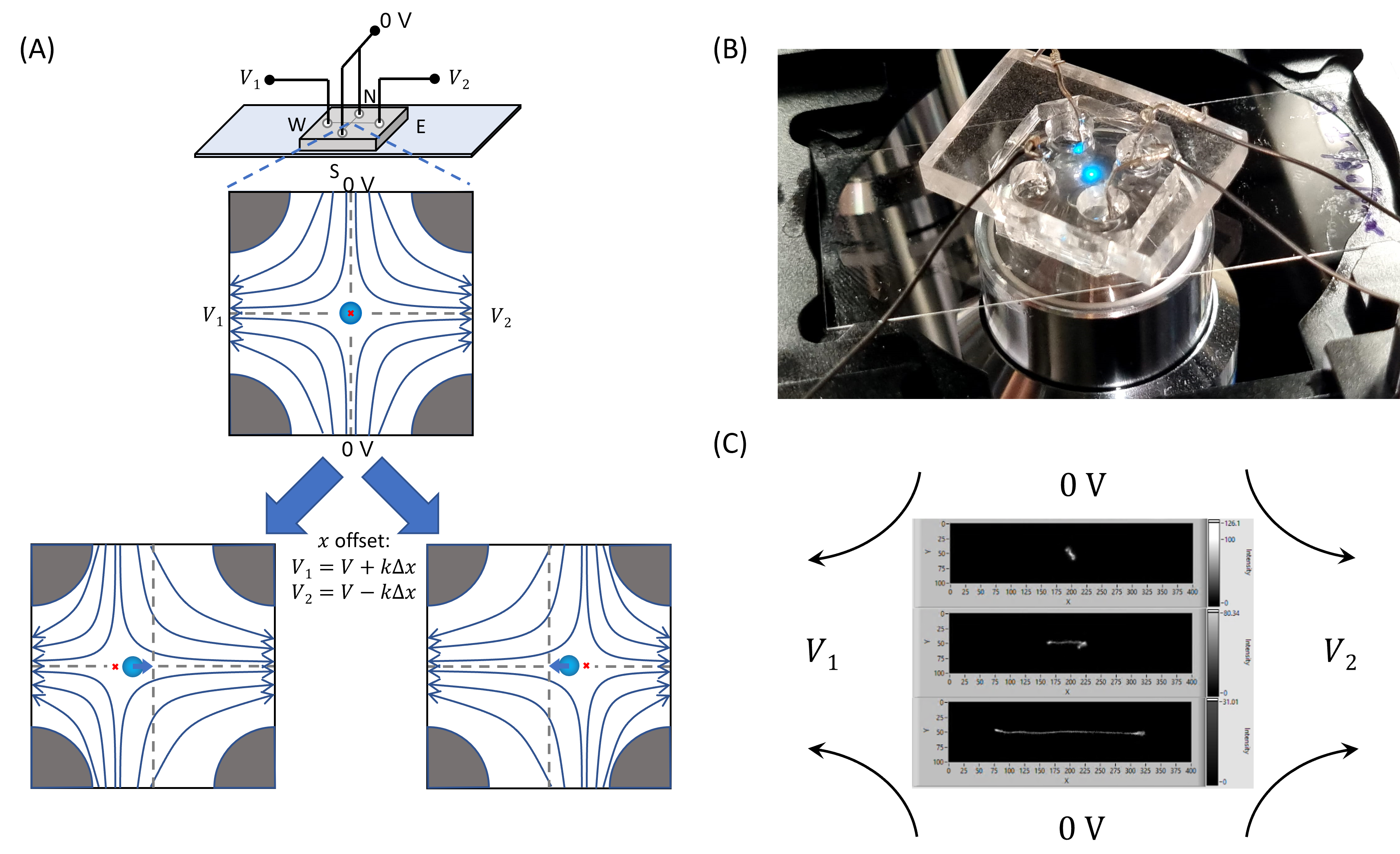}
  \caption{
      \textbf{Schematic of experimental setup.} (A) Top: Schematic of the experimental setup, consisting of a microfluidic cross-slot device and electrodes in the North, South, East and West reservoirs. $V_1$ and $V_2$ are computer-controlled voltages based on a feedback control system. Center: Negatively charged DNA molecules flow through the cross-slot channel according to the electric field lines. The blue circle represents an object at the saddle point. Bottom left, right: A proportional gain controller is used to trap and stretch a DNA molecule at the saddle point for long observation times in a planar elongational field. (B) Photo of microfluidic device sitting atop the microscope stage and arrangement of electrodes in the reservoirs. (C) Snapshots of a DNA molecule stretching under an elongational field.
  }
  \label{fig:experimentalsetup}
\end{figure}

\begin{figure}[H]
    \centering
    \includegraphics[width=8cm]{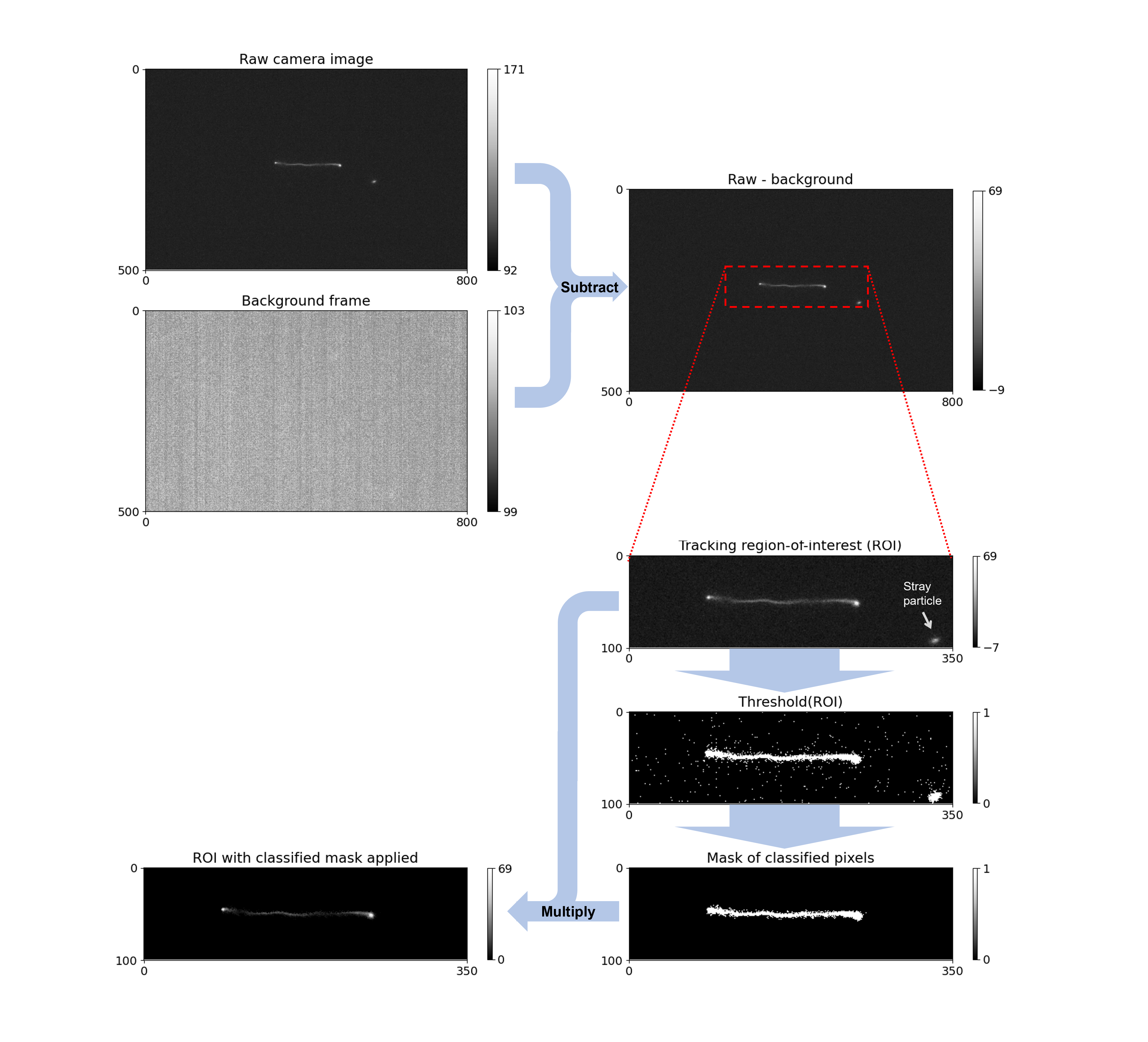}
  \caption{
      \textbf{Data filtering process for experimental images.} Image processing pipeline, showing the steps for obtaining a clean molecule image from a raw camera frame in real time. The clean image (bottom-left) is used to calculate molecule centroid coordinates and projection lengths in both axes.
  }
  \label{fig:filter}
\end{figure}

\begin{figure}[H]
    \centering
    \includegraphics[width=8cm]{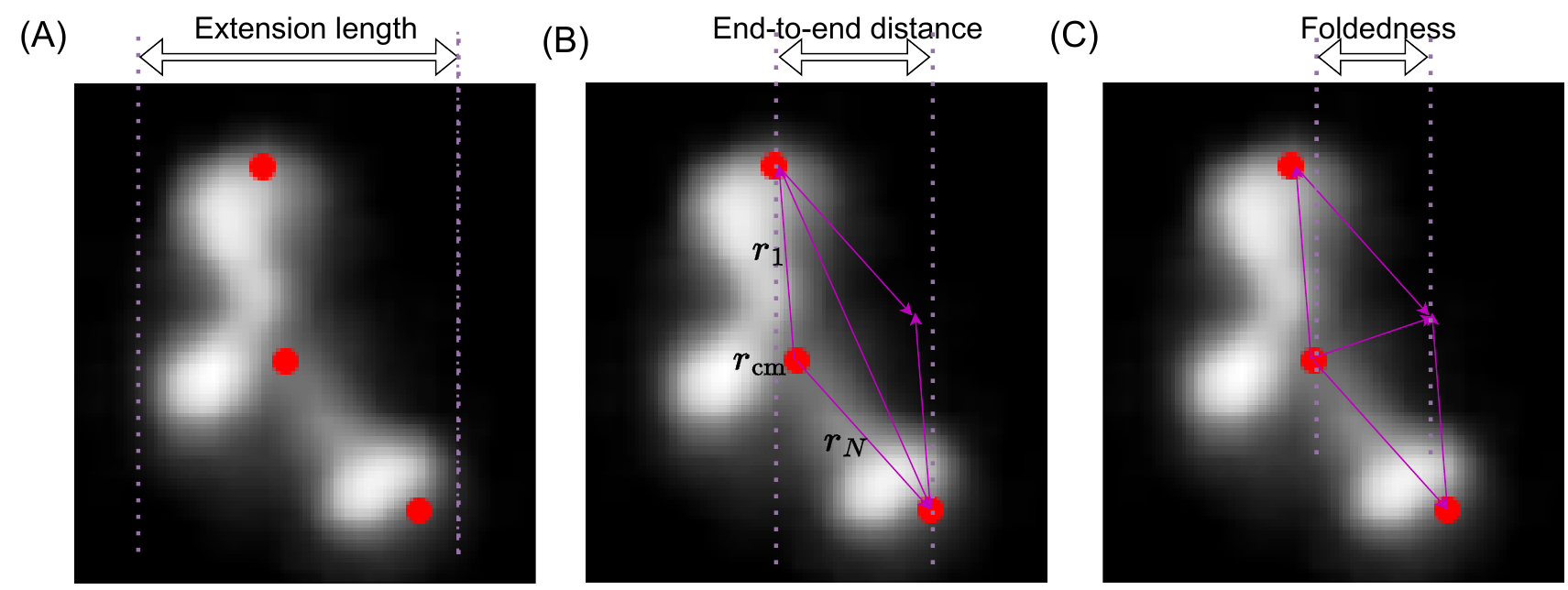}
  \caption{
      \textbf{Obtaining extension length (A), end-to-end distance (B) and
      foldedness (C) from experimental image with center of mass $r_{\text{cm}}$,
      and two end points.}
  }
  \label{fig:filter_low}
\end{figure}

\begin{figure}[H]
  \centering
 \includegraphics[width=\textwidth]{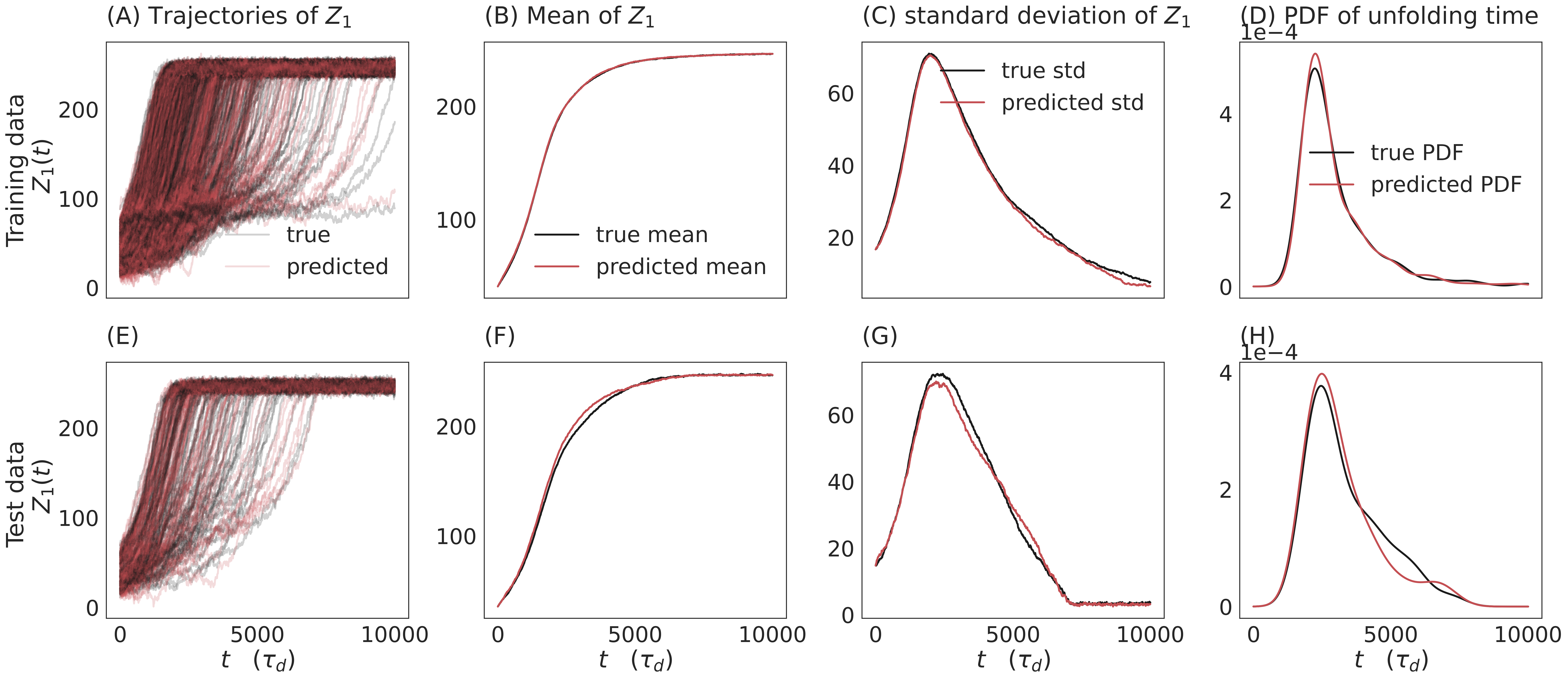}
    \caption{\textbf{Predicted stretching trajectories and statistics for 610 training trajectories (up) and 110 test trajectories (down): true data (black) and model
      prediction (red).} (A,E) Individual stretching trajectories of polymer chains from the different initial condition. (B,F) Mean and (C,G) standard
      deviations of polymer chain extensions.
      (D,H) Probability density function (PDF) of the chain unfolding times.}
    \label{fig:train_test}
\end{figure}

\begin{figure}[H]
    \centering
 \centerline{\includegraphics[width=\textwidth]{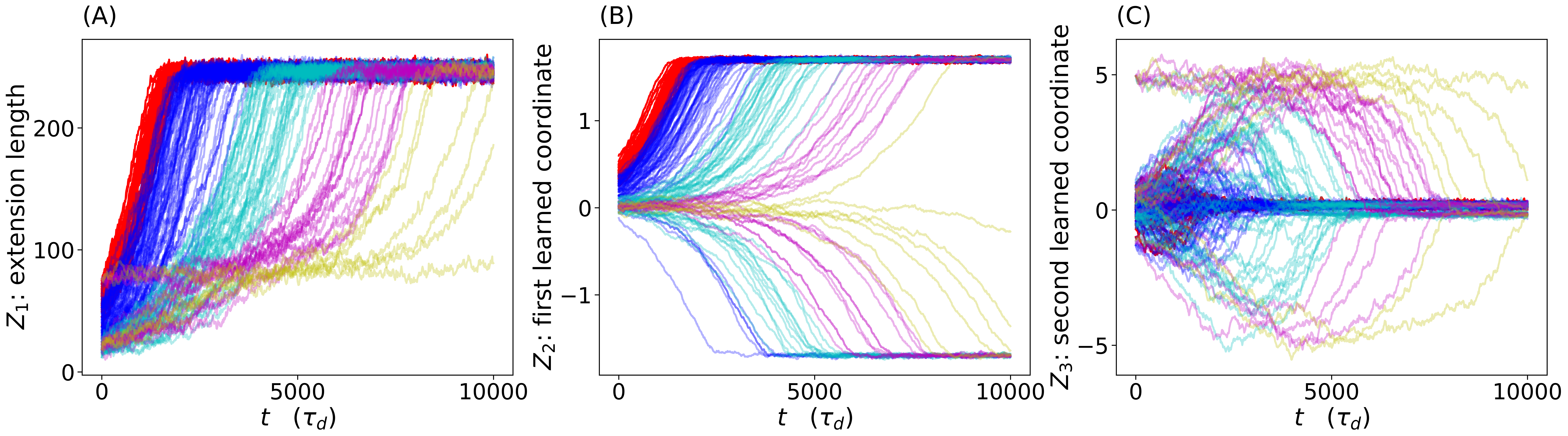}}
    \caption{\textbf{Learned reduced coordinates.} Evolution of (A) chain extension ($Z_1$), (B) first learned coordinate ($Z_2$, indicator of end-to-end distance) and (C) second learned coordinate ($Z_3$, indicator of foldedness) with time. The trajectories are colored by the chain unfolding times, red: $t_\text{unfold}<2000$; blue: $2000\leq t_\text{unfold}<4000$; cyan: $4000\leq t_\text{unfold}<6000$ ; magenta: $6000\leq t_\text{unfold}<8000$ ; yellow: $ t_\text{unfold}\geq 8000$.}
    \label{fig:learn_low_coordinates}
\end{figure}

\begin{figure}[H]
    \centerline{\includegraphics[width=\textwidth]{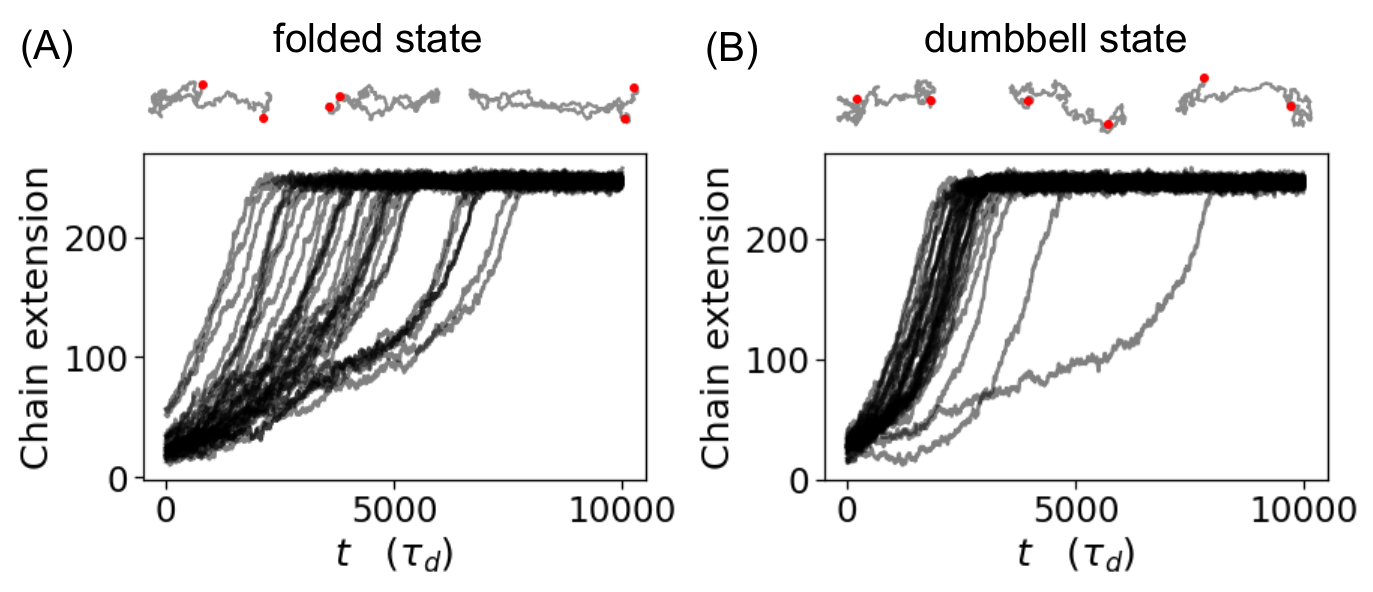}}
       \caption{\textbf{Stretching trajectories for polymer chains in the (A)
       folded and (B) dumbbell states.} The chain configurations were identified
       as described by Perkins et al. There is a large range of unfolding times
       within a given configuration type, hence the classification of chain
       configuration is insufficient for prediction purposes.}
       \label{fig:state_cate}
\end{figure}

\begin{figure}[H]
    \centerline{\includegraphics[width=\textwidth]{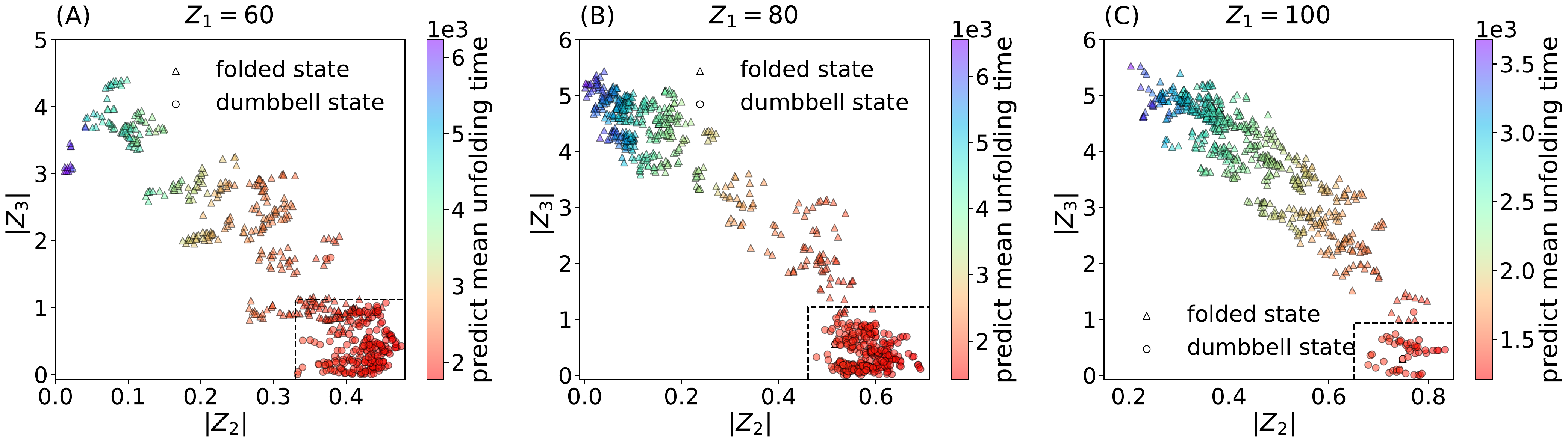}}
    \caption{
        \textbf{Consistency with configuration categorization scheme in current literature.} Plot of $\lvert  Z_3\rvert $ as a function of $\lvert  Z_2\rvert $ for folded and dumbbell configurations at (A) $Z_1=60$, (B) $Z_1=80$ and (C) $Z_1=100$. The markers are colored by the predicted chain unfolding times. The boxed region with high $\lvert  Z_2\rvert $ and low $\lvert  Z_3\rvert $ values encompasses a mix of folded and dumbbell chains with similar unfolding times, indicating that the broad categorization scheme is unable to provide accurate predictions.}
    \label{dumbbell_folded}
\end{figure}

\begin{figure}[H]
    \centering
    \includegraphics[width=8cm]{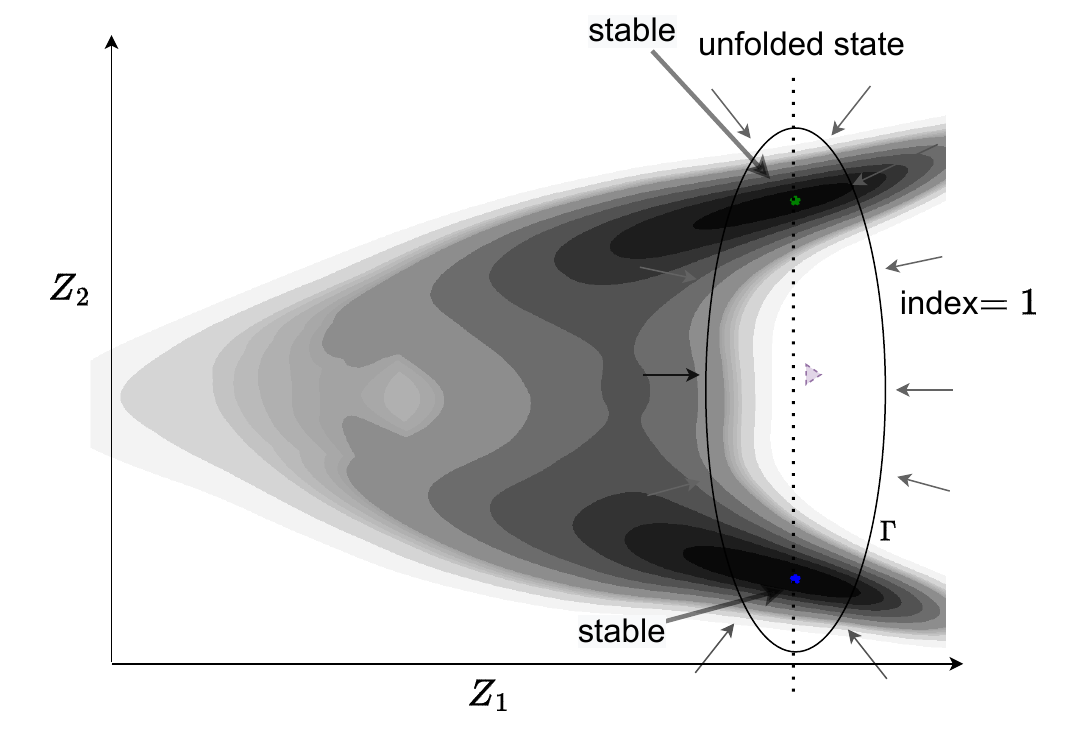}
    \caption{
        \textbf{Illustration of the limitation of a 2-dimensional
        potential landscape.}
        Due to the stability of the unfolded state,
        the vector fields around the curve $\Gamma$ point
        inwards towards the interior, so the index of $\Gamma$
        is $+1$.
        However, this contradicts the presence of two
        stable critical points inside $\Gamma$,
        which implies that its index is $+2$.
    }
    \label{fig:index}
\end{figure}

\begin{figure}[H]
    \centering
    \includegraphics[width=1\linewidth]{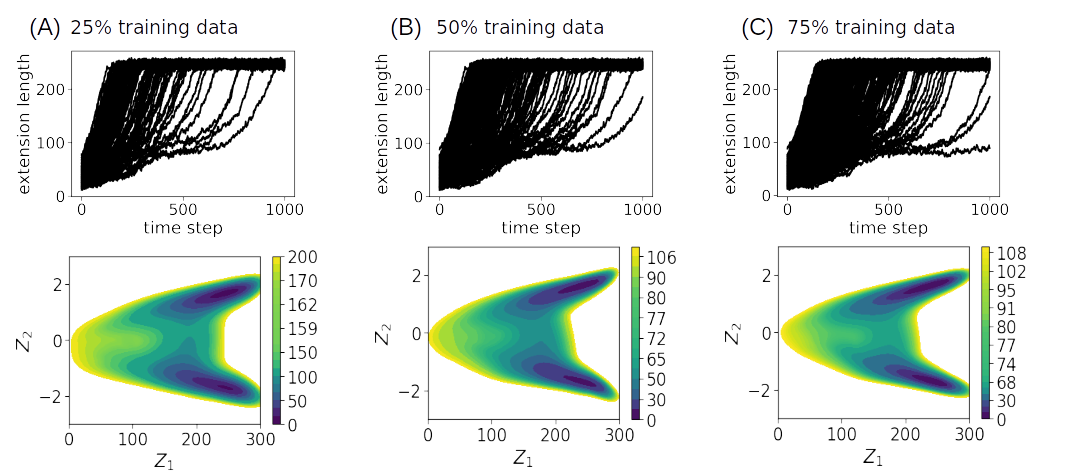}
    \caption{
        \textbf{Trajectories and potential landscapes for different percentages of training data.} The full dataset, as used in the rest of this work, contains 610 trajectories. In order to evaluate the impact of dataset size on prediction results, the S-OnsagerNet was trained using (A) 25\%, (B) 50\% and (C) 75\% of the 610 trajectories. The datasets in (A) and (C) (top) do not have shared trajectories, whilst the dataset (B) (top) contains data from both the 25\% and 75\% datasets. The potential landscapes (bottom) resulting from training with the different number of trajectories are plotted for $Z_1$ vs $Z_2$. It can be observed that each contains the characteristic features of the potential landscape discussed for the full dataset, with two areas of near-zero potential when the DNA reaches the steady-state, and the emergence of a saddle point around $Z_1 \approx 110$.
    }
    \label{fig:Potential_landscape}
\end{figure}

\newpage

\section{Supplementary tables}

\makeatletter
\renewcommand{\tablename}{Supplementary Table}
\makeatother

\begin{table}[h]
\begin{center}
\begin{minipage}{\textwidth}
\caption{Variation in prediction error with number of reduced dimensions. Note that the dimension here includes the chosen macroscopic coordinate corresponding to polymer extension length.}\label{tab2}
\begin{tabular*}{\textwidth}{@{\extracolsep{\fill}}lcccc@{\extracolsep{\fill}}}
\toprule
\multicolumn{5}{c}{relative  $L^2$ error of training/test data (\%)}\\[1ex]
\hline
Dimension & &mean&standard derivation&PDF of unfolding time\\[1ex]
\hline
2D & &0.7927/2.086&13.40/32.96&12.63/17.43\\[1ex]
\hline
3D & &0.2828/1.861&3.147/6.057&5.717/12.42\\[1ex]
\hline
4D & & 0.4363/1.363&9.329/30.62&3.041/10.52\\[1ex]
\bottomrule
\end{tabular*}\label{tab:dimension}
\end{minipage}
\end{center}
\end{table}

\begin{table}[h]
\begin{center}
\begin{minipage}{\textwidth}
\caption{Variation in test prediction error of 3D model with the number of trajectories in the training data. We group the test trajectories into three categories (fast, medium, slow) according to their rate of stretching.}
\begin{tabular*}{\textwidth}{@{\extracolsep{\fill}}lcccc@{\extracolsep{\fill}}}
\toprule
\multicolumn{5}{c}{relative test $L^2$ error of mean/ standard derivation /pdf of unfolding time (\%)}\\[1ex]
\hline
Number of trajectories & &fast&medium&slow\\[1ex]
\hline
153 (25\%) & &0.246/11.38/26.37& 6.749/27.51/44.80&9.666/55.33/33.81\\[1ex]
\hline
305 (50\%) & &2.504/23.52/72.49&2.967/12.03/19.63&3.404/8.438/14.06\\[1ex]
\hline
457 (75\%) & & 1.206/16.91/26.34&6.581/35.28/34.80&2.120/8.866/15.56\\[1ex]
\hline
610 (100 \%) & &0.388 /13.06/15.95& 2.101/9.227/20.18&2.027/7.121/13.42\\[1ex]
\bottomrule
\end{tabular*}\label{tab:percentage}
\end{minipage}
\end{center}
\end{table}

\end{document}